\def\shuffle{\sqcup\mathchoice{\mkern-7mu}{\mkern-7mu}{\mkern-3.2mu}{\mkern-3.8mu}\sqcup}
\newtheorem{theorem}{Theorem}
\newtheorem{corollary}[theorem]{Corollary}
\newtheorem{lemma}[theorem]{Lemma}
\theoremstyle{remark}
\newtheorem{remark}{Remark}
\newtheorem{example}{Example}
\begin{document}
\title{The Frobenius problem for the shuffle operation}

\author{Jeremy Nicholson\thanks{The author was supported by an NSERC
    USRA.}  { }and Narad Rampersad\thanks{The author was supported by
    NSERC Discovery Grant No.~418646--2012.} \\
Department of Mathematics and Statistics \\
University of Winnipeg \\
515 Portage Avenue \\
Winnipeg, Manitoba R3B 2E9 (Canada)\\
\url{jnich998@hotmail.com, n.rampersad@uwinnipeg.ca}}

\date{\today}
\maketitle

\begin{abstract}
Given a set $S$ of words, let $S^\dagger$ denote the \emph{iterated
  shuffle} of $S$.  We characterize the finite sets $S$ for which $S^\dagger$
is co-finite, and we give some bounds on the length of a longest word not in
$S^\dagger$.
\end{abstract}

\section{Introduction}
The classical \emph{Frobenius problem} is the following: Given
positive integers $m_1, m_2, \ldots, m_k$ such that
$\gcd(m_1,m_2,\ldots,m_k)=1$, what is the largest integer that cannot
be written as a non-negative integer linear combination of
$m_1,m_2,\ldots,m_k$?  Schur showed that this number always
exists and Sylvester showed that when $k=2$ this number is equal to
$m_1m_2-m_1-m_2$.  The case $k \geq 3$ is rather more difficult.  An
entire book has been devoted to this problem \cite{RA05}.

Shallit proposed the following ``non-commutative'' version of the
Frobenius problem: Given a set of words
$S = \{w_1, w_2, \ldots, w_k\}$ over an alphabet $\Sigma$ such that
$S^*$ is co-finite (i.e., contains all but finitely many words over
$\Sigma$), what is the length of a longest word not in $S^*$?  In
other words, what is the length of a longest word that cannot be
written as a \emph{concatenation} of a sequence of words chosen from
$S$?  Xu studied this problem, which he called the \emph{Frobenius
  problem in the free monoid}, in his Ph.D.\ thesis \cite{Xu09}.  Note
that this problem reduces to the classical Frobenius problem when
$\Sigma$ is a unary alphabet.  For larger alphabets, Xu considered the
special case where $S$ contains words of only two lengths $m$ and $n$
(say, $m<n$).  He showed that the answer to the problem in this
setting could be exponential in $n-m$.

In this paper we examine the same problem with respect to the
\emph{shuffle} operation on words.  Informally, the shuffle of two
words $u$ and $v$ is the set of all words that can be obtained by
``interleaving'' the letters of $u$ with the letters of $v$ in all
possible ways.  This shuffle operation on words was introduced, in an
algebraic setting, by Eilenberg and MacLane \cite{EM53}.  Some notable
early papers that study the shuffle operation from a formal languages
perspective are Jantzen \cite{Jan81, Jan85} and Warmuth and Haussler
\cite{WH84}.

Given a set $S$ of words, let $S^\dagger$ denote the \emph{iterated
  shuffle} of $S$ (see the formal definition in the next section).  In
this paper we characterize the finite sets $S$ for which $S^\dagger$
is co-finite, and we show that the length of a longest word not in
$S^\dagger$ is at most quadratic in the length of the longest word in
$S$.  The Frobenius problem in this setting therefore turns out to be
somewhat closer to the classical integer version of the problem,
rather than the ``free monoid'' version of the problem studied by Xu.

\section{Preliminaries}

Let us recall again the classical Frobenius problem:  Given
positive integers $m_1, m_2, \ldots, m_k$ such that
$\gcd(m_1,m_2,\ldots,m_k)=1$, what is the largest integer that cannot
be written as a non-negative integer linear combination of
$m_1,m_2,\ldots,m_k$?  Let $g(m_1,m_2,\ldots,m_k)$ denote this
quantity, which is known as the \emph{Frobenius number} for the given
instance of the problem.  For $k=2$ we have the exact formula of
Sylvester: $g(m_1,m_2) = m_1m_2 - m_1 - m_2$.  For $k\geq 2$ there are a
number of upper bounds; for our purposes, the following one due to
Schur (see \cite{Bra42}) will suffice: if $m_1 \leq m_2 \leq \cdots
\leq m_k$, then
\begin{equation}\label{Xu}
g(m_1,m_2,\ldots,m_k) \leq m_1m_k-m_1-m_k.
\end{equation}

Our goal in this paper is to generalize the Frobenius problem to the
setting of words over an alphabet.  Let $\Sigma$ denote an alphabet
and let $\Sigma^*$ denote the set of all words over $\Sigma$.  For any
$a \in \Sigma$ and $w \in \Sigma^*$, the number of occurrences of $a$
in $w$ is denoted by $|w|_a$.

Let $|\Sigma| = q$ and let $S$ be a finite subset of $\Sigma^*$ such
that $S^*$ is co-finite.  Xu showed that if $q>1$ and $S$ contains
words of lengths $m$ and $n$ only $(m<n)$, then the longest word not
in $S^*$ has length at most $mq^{n-m} + n - m$ and this bound is
tight.

The Kleene star operator used above can be viewed as iterated
concatenation.  In this paper we will study the Frobenius problem for
iterated shuffle.  The \emph{shuffle operator} can be defined
as follows
\[
u \shuffle v = \{u_1v_1u_2v_2\cdots u_kv_k\; :\; u_i, v_i \in \Sigma^*,\;
 u = u_1u_2\cdots u_k, \text{ and } v=v_1v_2\cdots v_k\}.
\]
The \emph{iterated shuffle}\footnote{There is no standard notation to
  denote the iterated shuffle; we are following Jantzen's use of the
  ``dagger''.} of a word is defined by
\[
u^\dagger = \bigcup_{i=0}^\infty \underbrace{(u \shuffle u \shuffle
  \cdots \shuffle u)}_{i\text{ times}}.
\]
We extend both of these operations from words to sets of words in the
usual way.  First, for sets of words $A$ and $B$ we define
\[
A \shuffle B = \{ u \shuffle v \;:\; u \in A \text{ and } v \in B\},
\]
and
\[
A^\dagger = \bigcup_{i=0}^\infty \underbrace{(A \shuffle A \shuffle
  \cdots \shuffle A)}_{i\text{ times}}.
\]
 For example,
\[ \{01,011\} \shuffle \{0\} = \{001, 010, 0011, 0101, 0110\}\]
and
\[ \{01,2\}^\dagger = \{w \in \{0,1,2\}^* : |w|_0 = |w|_1 \text{ and
  every prefix $u$ of $w$ satisfies } |u|_0 \geq |u|_1\}. \]
Over the alphabet $\{0,1\}$ the set $\{00,000,11,111,01,10\}^\dagger$
is co-finite; the only strings not in this set are
$0, 1, 001, 010, 100, 011, 101, 110.$

The \emph{Frobenius problem for the shuffle operation} is therefore
the following: If $S$ is a finite set of words over $\Sigma$ such that
$S^\dagger$ is co-finite, what is the length of a longest word not in
$S^\dagger$?  Note that when $|\Sigma|=1$, say $\Sigma=\{x\}$, this
problem is equivalent to the integer Frobenius problem.  In this case,
either $x \in S$, in which case $S^\dagger = \Sigma^*$, or $S$ has the
form $S = \{x^{m_1},x^{m_2},\ldots,x^{m_k}\}$, where $m_i \geq 2$ for
$i=1,2,\ldots,k$.  The length of any word in $S^\dagger$ is a multiple
of $\gcd(m_1,m_2,\ldots,m_k)$, so for $S$ to be co-finite we must have
$\gcd(m_1,m_2,\ldots,m_k)=1$.  In this case it is clear that
$x^n \in S^\dagger$ if and only if $n$ can be written as a
non-negative integer linear combination of $m_1,m_2,\ldots,m_k$.  Hence,
the length of the longest word not in $S^\dagger$ is exactly
$g(m_1,m_2,\ldots,m_k)$.

Lastly we define the act of \emph{matching} a word. Given two words
$w = w_1 \cdots w_n$ and $u = u_1 \cdots u_k$, a \emph{match} of $u$
in $w$ is a subset of positions $i_1 < i_2 < \ldots < i_k$ such that
$w_{i_j} = u_j$ for $j = 1,2,\ldots, k$.  For a given set of words
$S$, a word $y$ is in $S^{\dagger}$ if and only if all positions of
$y$ can be covered by a pairwise disjoint set of matches using words
in $S$. In this case, we say $y$ can be \emph{matched} using words in
$S$.  For example, if $S=\{011, 012\}$, then $010121\in S^{\dagger}$
since it can be matched using a single $011$ and a single $012$
($\underline{01}\overline{012}\underline{1}$ or
$\overline{01}\underline{01}\overline{2}\underline{1}$). However,
$01012112\not\in S^{\dagger}$ because one of the $0$'s will have to be
matched twice to ensure the last $1$ and $2$ get matched.

\section{A characterization of $S$ such that $S^\dagger$ is co-finite}

In this section we give a complete characterization of the finite sets $S$ for which $S^\dagger$
is co-finite and we give some bounds on the length of a longest word not in $S^\dagger$. Furthermore, given a set $S$ such that $S^{\dagger}$ is co-finite, we give a detailed description on how to match a sufficiently long arbitrary word using the words in $S$ and we give lower bounds for $|S|$.

For the rest of this section, let the following be true. 
Let $q \geq 1$ and let $S$ be a finite set of words over an alphabet
$\{x_1,x_2,\ldots,x_q\}$.  For each $i \in \{1,2,\ldots,q\}$, let
$\mathscr{T}_i$ denote the collection of all subsets $T_i \subseteq S$,
where either
\begin{equation}\label{smallTi} T_i = \{x_i\},\end{equation} or
\begin{equation}\label{bigTi} T_i=\{x_i^{m_{i, 1}}, x_i^{m_{i, 2}}, \ldots,
x_i^{m_{i, h_i}}\} \cup \bigcup_{\substack{j=1 \\ j\neq i}}^{q}
\{x_ix_j^{a_{i, j}}, x_j^{b_{j, i}}x_i\},\end{equation}
where $m_{i, h_i}> \cdots > m_{i, 2}>m_{i, 1}\geq 2$, $h_i\geq2$,
$\gcd(m_{i, 1}, m_{i, 2}, \ldots , m_{i, h_i})=1$ and
$a_{i, j}, b_{j, i}\geq 1$ for all $j\in \{1, 2, \ldots, q\}$ such
that $i\neq j$.

Our first main result is the following.

\begin{theorem}
\label{cofinite_characterization}
The set $S^\dagger$ is co-finite if and only if for each $i \in\{1,2,\ldots,q\}$, the collection $\mathscr{T}_i$ is non-empty.
\end{theorem}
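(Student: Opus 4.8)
The plan is to prove both implications through the matching characterization recalled above: a word $y$ lies in $S^\dagger$ exactly when its positions can be partitioned into pairwise disjoint matches of words of $S$, where a match of $u$ in $y$ is a (not necessarily contiguous) subsequence of positions spelling $u$. Throughout I write $M_i = \{m \ge 1 : x_i^m \in S\}$ for the exponents of the pure powers of $x_i$ in $S$, and I use the elementary observation that a monochromatic word $x_i^N$ can only be matched using pure powers $x_i^m$ with $m \in M_i$, so that $x_i^N \in S^\dagger$ iff $N$ is a non-negative integer combination of the elements of $M_i$.

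For the \emph{necessity} direction I would assume $S^\dagger$ is co-finite and fix a letter $x_i$. Since all but finitely many of the words $x_i^N$ lie in $S^\dagger$, the observation above together with the integer Frobenius problem forces $M_i \neq \emptyset$ and $\gcd(M_i) = 1$ (a larger gcd would leave infinitely many $x_i^N$ unmatched). If $1 \in M_i$ then $x_i \in S$ and $\{x_i\} \in \mathscr{T}_i$, giving form~\eqref{smallTi}. Otherwise every element of $M_i$ is at least $2$ and $\gcd(M_i) = 1$ forces $|M_i| \ge 2$, supplying the powers of form~\eqref{bigTi}. To force the mixed words I would examine, for each $j \neq i$, the co-finitely many words $x_i x_j^N$ and $x_j^N x_i$. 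In $x_i x_j^N$ the first position is $x_i$, so the match covering it begins there and reads $x_i$ followed only by $x_j$'s; as $x_i \notin S$ this match is some $x_i x_j^{a_{i,j}}$ with $a_{i,j} \ge 1$, whence $x_i x_j^{a_{i,j}} \in S$. Symmetrically, the last position of $x_j^N x_i$ forces some $x_j^{b_{j,i}} x_i \in S$. Collecting these words yields a subset of $S$ of form~\eqref{bigTi}, so $\mathscr{T}_i \neq \emptyset$.

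For the \emph{sufficiency} direction I would fix one representative $T_i \in \mathscr{T}_i$ for each $i$ and show that every word $w$ of sufficiently large length $n$ can be matched. Let $x_d$ occur $c_d \ge n/q$ times in $w$ (a \emph{dominant} letter). The key idea, which avoids exhausting the supply of $x_d$'s, is to match each non-dominant letter \emph{using as little of $x_d$ as possible}. For $j \neq d$: if the chosen $T_j$ equals $\{x_j\}$, match each occurrence of $x_j$ by the letter $x_j \in S$; if $T_j$ has form~\eqref{bigTi}, match the occurrences of $x_j$ by pure powers $x_j^m$ whenever $c_j$ is representable by $M_j$ (spending no $x_d$'s at all), and only when $c_j$ is not representable — hence at most the Frobenius number of $M_j$, a constant — match those few occurrences one at a time by the mixed words $x_j x_d^{a_{j,d}}$ or $x_d^{b_{d,j}} x_j$. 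This spends only a constant number $C_0$ of $x_d$'s in total, where $C_0$ depends on $S$ but not on $w$. The remaining $x_d$-positions, numbering at least $n/q - C_0$, are then matched by the letter $x_d$ (if $T_d = \{x_d\}$) or by pure powers $x_d^m$; in the latter case $\gcd(M_d) = 1$ and $n/q - C_0$ exceeds the Frobenius number of $M_d$ once $n$ is large, so this count is representable. As all matches use pairwise disjoint position sets, $w \in S^\dagger$, and $S^\dagger$ is co-finite.

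I expect the sufficiency direction to be the main obstacle, and within it the counting and positional bookkeeping. The crucial point is that matching the non-dominant letters by their own powers keeps the number of $x_d$'s spent on mixed words bounded by a constant, so the dominant letter is never depleted; the complementary subtlety is purely positional. Because matches are subsequences rather than factors, a pure power $x_j^m$ may be assigned to any $m$ occurrences of $x_j$, so the only care needed is in placing the few mixed-word matches: for a leftover occurrence of $x_j$ at position $p$, at least one side of $p$ contains $\ge n/(2q)$ copies of $x_d$, from which the required $a_{j,d}$ or $b_{d,j}$ unused copies can be drawn once $n$ is large. Tracking these bounds explicitly is precisely what should yield the quadratic bound on the longest excluded word advertised in the introduction.
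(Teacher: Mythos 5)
Your proposal is correct. The necessity half is essentially the paper's Lemma~\ref{cofinite_only_if}: monochromatic words $x_i^N$ force the pure powers with coprime exponents, and the words $x_ix_j^N$, $x_j^Nx_i$ force the mixed words via the match covering the unique occurrence of $x_i$. The sufficiency half, however, takes a genuinely different route from the paper's Theorem~\ref{upper bound}. The paper partitions the alphabet into the set $C$ of letters occurring at most $g_i$ times and the set $D$ of letters occurring more often, proves by a counting argument (the $\gamma_i$ bookkeeping) that the letters of $C$ can be distributed among \emph{several} donors in $D$ without over-taxing any one of them, and matches only the residue $r_c<m_{c,1}$ of each scarce letter's occurrences with mixed words. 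You instead take a single pigeonhole-dominant letter $x_d$ (at least $n/q$ occurrences) as the universal donor, and observe that a letter $x_j$ needs mixed-word help only when its count is non-representable over $M_j$, hence at most $g(M_j)$, a constant depending only on $S$; your positional argument (one side of any position carries at least $n/(2q)$ copies of $x_d$, of which only a constant have been consumed) cleanly replaces the paper's more delicate count of unmatched donors at the $r_c$-th step. What your version buys is simplicity: no distribution lemma and no case analysis over which large letter absorbs which small one. What it costs is the explicit bound: matching \emph{all} occurrences of a non-representable letter with mixed words can consume up to $g(M_j)\cdot\max(a_{j,d},b_{d,j})$ donors per letter, which is cubic in the length of the longest word of $S$, so the threshold you get is cubic rather than the quadratic one of Theorem~\ref{upper bound} and its corollary. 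This is easily repaired by matching only $r_j<m_{j,1}$ occurrences of $x_j$ with mixed words and the rest with $x_j^{m_{j,1}}$, as the paper does; as written, though, your closing claim that the bookkeeping yields the advertised quadratic bound does not quite follow.
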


\begin{proof}

Follows directly from Lemma~\ref{cofinite_only_if} and
Theorem~\ref{upper bound} below.

\end{proof}


\begin{lemma}\label{cofinite_only_if}

The set $S^\dagger$ is co-finite only if for each $i \in
\{1,2,\ldots,q\}$, the collection $\mathscr{T}_i$ is non-empty.

\end{lemma}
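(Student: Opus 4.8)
The plan is to prove the contrapositive: if $\mathscr{T}_i$ is empty for some $i$, then $S^\dagger$ is \emph{not} co-finite, which I will do by exhibiting an infinite family of words over $\{x_1,\ldots,x_q\}$ that cannot be matched using the words of $S$. Fix such an index $i$. The key observation is that every occurrence of the letter $x_i$ in any word $w \in S^\dagger$ must be covered by a match of some word $s \in S$ that itself contains at least one $x_i$. So I would first catalogue which words of $S$ contain the letter $x_i$, and argue that if none of them can participate in matching a word that is ``mostly $x_i$'', then long powers of $x_i$ (possibly padded with a fixed context) escape $S^\dagger$.

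More precisely, I would examine what the emptiness of $\mathscr{T}_i$ forbids. The collection $\mathscr{T}_i$ being non-empty means $S$ contains either the single letter $x_i$ (the case \eqref{smallTi}), or else a complete package as in \eqref{bigTi}: powers $x_i^{m_{i,1}},\ldots,x_i^{m_{i,h_i}}$ whose exponents have gcd $1$, \emph{together with}, for every other letter $x_j$, both a word $x_i x_j^{a_{i,j}}$ and a word $x_j^{b_{j,i}} x_i$. The plan is to show that if \emph{no} subset $T_i \subseteq S$ of either form exists, then some necessary ingredient for covering long runs of $x_i$ is missing. I would split into cases according to what is absent: (a) $S$ has no word at all whose letters, when restricted to $x_i$'s, could be assembled into arbitrarily long pure-$x_i$ blocks; or (b) the only $x_i$-bearing words force an accompanying occurrence of some other letter $x_j$ in a way that cannot be balanced. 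In case (a), the gcd condition must fail — the available pure powers $x_i^{m}$ have a common divisor $d>1$, so no word $x_i^n$ with $n\not\equiv 0\pmod d$ lies in $S^\dagger$, giving infinitely many excluded words.

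The more delicate case is (b), where the obstruction is not about pure powers but about the ``mixing'' words $x_i x_j^{a}$ and $x_j^{b} x_i$: every word of $S$ that carries an $x_i$ also drags in a prescribed pattern of some other letter. Here I would build a witness word of the form $x_i^N$ (or a long block of $x_i$'s separated by a carefully chosen, bounded amount of other letters) and count: each match covering one of the $N$ copies of $x_i$ must use a word from $S$, and the absence of the required $x_i x_j^{a_{i,j}}$ or $x_j^{b_{j,i}} x_i$ words means every such match also consumes letters $x_j$ that simply are not present (or not present in the right positions) in the witness. A counting or pigeonhole argument on the number of non-$x_i$ letters consumed versus the number available then shows the witness cannot be fully matched once $N$ is large enough.

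I expect the main obstacle to be the case analysis itself: enumerating all the ways $\mathscr{T}_i$ can fail to be non-empty and producing, for each way, an explicit infinite family of unmatchable words, while being careful that the witnesses genuinely lie outside $S^\dagger$ for \emph{all} possible matchings (not merely the obvious one). The subtlety is that a match of a word like $x_i x_j^{a}$ need not place its letters contiguously in the witness — the shuffle allows interleaving — so the argument must be framed in terms of the multiset of letters each match consumes and their relative order, rather than contiguous factors. Once the bookkeeping is set up so that ``covering an $x_i$ costs a fixed debt in $x_j$'s that the witness refuses to pay,'' the conclusion that $S^\dagger$ omits infinitely many words should follow by choosing the witness's letter-counts to violate the forced balance.
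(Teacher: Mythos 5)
Your contrapositive framing is logically equivalent to the paper's direct argument (the paper assumes $S^\dagger$ is co-finite and reads off the forced structure of $S$), and your case (a) is sound: if the pure powers of $x_i$ in $S$ have a common divisor $d>1$ (or do not exist at all), then only those $x_i^n$ with $d\mid n$ can lie in $S^\dagger$, since a word of $S$ containing any letter other than $x_i$ can never participate in a match inside $x_i^n$. The gap is in your case (b). Emptiness of $\mathscr{T}_i$ (with $x_i\notin S$) has a third failure mode that your dichotomy does not cover: $S$ may well contain pure powers $x_i^{m_{i,1}},\ldots,x_i^{m_{i,h_i}}$ with $\gcd=1$, yet for some particular $j\neq i$ lack a word of the form $x_ix_j^{a}$ (or of the form $x_j^{b}x_i$). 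In that situation your proposed witnesses --- $x_i^N$, or a long block of $x_i$'s padded with a bounded number of other letters --- are typically \emph{in} $S^\dagger$: the many $x_i$'s are absorbed by the pure powers at no cost in $x_j$'s, so the ledger ``each $x_i$ incurs a debt of $x_j$'s'' never opens. Your counting argument only applies when \emph{every} $x_i$-bearing word of $S$ contains another letter, which is already subsumed by case (a).

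The witness for the missing-mixed-word case must be oriented the other way: take $y_s=x_ix_j^{s}$ (respectively $x_j^{s}x_i$) for $s\to\infty$, i.e.\ a \emph{single} $x_i$ next to a long run of $x_j$'s. Any word of $S$ covering that lone $x_i$ must have all its letters matched inside $y_s$, hence consists only of the letters $x_i$ and $x_j$ with exactly one $x_i$; and since the $x_i$ sits at the first (resp.\ last) position, no letter of that word may precede (resp.\ follow) its $x_i$. So the covering word must be $x_i$ itself or $x_ix_j^{a}$ (resp.\ $x_j^{b}x_i$), all of which are absent from $S$ by hypothesis, and the infinite family $\{y_s\}$ lies outside $S^\dagger$. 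This is exactly how the paper extracts the terms $x_ix_j^{a_{i,j}}$ and $x_j^{b_{j,i}}x_i$ of \eqref{bigTi}. Your bookkeeping remark about matches not being contiguous is well taken, but it is resolved here by the positional observation above rather than by a global count of letters consumed.
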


\begin{proof}

Let $x_i$ be an arbitrary letter in our alphabet and assume $\{x_i\}\not\subseteq S$. If $S^{\dagger}$ is co-finite, then every sufficiently long string of $x_i$'s must be in $S^{\dagger}$. So $S$ must have a string of $x_i$'s, notably $x_i^{m_{i, 1}}$. However, then $S^{\dagger}$ will only contain strings of $x_i$'s that are multiples of $m_{i, 1}$ in length. Thus $S$ must have at least two strings of $x_i$'s, notably $x_i^{m_{i, 1}}$, $x_i^{m_{i, 2}}$, \ldots, $x_i^{m_{i, h_i}}$ (where $m_{i, h_i}> ...> m_{i, 2}>m_{i, 1}\geq 2$ and $h_i\geq2$). Furthermore, every string of $x_i$'s in $S^{\dagger}$ is a multiple of $\gcd(m_{i, 1}, m_{i, 2}, \ldots, m_{i, h_i})$ in length, so if $S^{\dagger}$ is co-finite, then $\gcd(m_{i, 1}, m_{i, 2}, \ldots, m_{i, h_i})=1$. Let $x_j$ be another arbitrary letter in our alphabet. Then $x_ix_j^s$ and $x_j^sx_i$ are in $S^{\dagger}$ for all sufficiently large $s$. Since $\{x_i\}\not\subseteq S$, 
$$\bigcup_{\substack{j=1 \\ j\neq i}}^{q}\{x_ix_j^{a_{i, j}}, x_j^{b_{j, i}}x_i\}$$ must be a subset of $S$ where $a_{i, j}, b_{j, i}\geq 1$ for all $j\in \{1, 2, \ldots, q\}$ such that $i\neq j$. Thus $\mathscr{T}_i$ is non-empty and the result follows.

\end{proof}

\begin{theorem}\label{upper bound}

Assume for each $i \in\{1,2,\ldots,q\}$, the collection $\mathscr{T}_i$ is non-empty. 
If $\{x_i\}\in\mathscr{T}_i$, define $g_i=-1$ and $m_{i, 1}=a_{i, j}=b_{j, i}=0$ for all $j\in\{1,2,\ldots,q\}$. 
If $\{x_i\}\notin\mathscr{T}_i$, then let $T_i$ be a subset of $S$ of the form \eqref{bigTi}. Define $g_i=g(m_{i, 1}, m_{i, 2}, ..., m_{i, h_i})$ and let $m_{i, 1}$, $m_{i, 2}$, \ldots, $m_{i, h_i}$, $a_{i, j}$, $b_{j, i}$ be as they are defined for the chosen $T_i$ for all $j\in\{1,2,\ldots,q\}$.
Then $S^\dagger$ contains every word of length at least
$$\sum_{i=1}^q g_i +q+(q-1)(\max_i(m_{i, 1}))(\max_{i, j,  j\neq i} (a_{i, j}, b_{j, i})).$$

\end{theorem}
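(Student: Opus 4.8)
The plan is to argue constructively: given a word $w$ with $n_i = |w|_{x_i}$ occurrences of each letter, I will exhibit a disjoint set of matches covering $w$, using the powers $x_i^{m_{i,k}}$ to absorb the bulk of each letter and the mixed words $x_ix_j^{a_{i,j}}$, $x_j^{b_{j,i}}x_i$ to dispose of the few occurrences that the powers cannot. Two elementary number-theoretic facts drive this. Since for each $i$ with $\{x_i\}\notin\mathscr{T}_i$ we have $\gcd(m_{i,1},\ldots,m_{i,h_i})=1$, every integer $n\ge g_i+1$ is a non-negative combination of the $m_{i,k}$, so $n$ copies of $x_i$ can be matched by powers alone. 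Moreover, the largest multiple $r$ of $m_{i,1}$ with $r\le n$ is itself such a combination and satisfies $0\le n-r<m_{i,1}$; hence from any count $n\ge 0$ one can match all but at most $m_{i,1}-1$ occurrences of $x_i$ using powers. (For $\{x_i\}\in\mathscr{T}_i$ we set $m_{i,1}=0$ and match each $x_i$ individually, leaving no remainder.) Writing $M=\max_i m_{i,1}$ and $A=\max_{i,j,\,j\neq i}(a_{i,j},b_{j,i})$, I call $x_i$ \emph{abundant} if $n_i\ge g_i+1$ and \emph{deficient} otherwise. Because the length hypothesis reads $|w|\ge\sum_i(g_i+1)+(q-1)MA$, we get $\sum_i(n_i-g_i-1)\ge(q-1)MA$; in particular at least one letter is abundant, and the total \emph{surplus} $\sum_{x_i\text{ abundant}}(n_i-g_i-1)$ is at least $(q-1)MA$.

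Next I would reduce and then clean up. Each abundant letter is (eventually) matched entirely by its own powers, but I first set aside its surplus $n_i-(g_i+1)$ copies in reserve. Each deficient letter $x_i$ is reduced by its powers to a representable count, leaving at most $m_{i,1}-1\le M-1$ unmatched \emph{anchor} occurrences; since at most $q-1$ letters are deficient, there are at most $(q-1)(M-1)$ anchors in all. I then greedily match each anchor $x_i$ by a mixed word of $T_i$, pairing it with some abundant letter $x_j$ and consuming $a_{i,j}$ or $b_{j,i}\le A$ reserve copies of $x_j$, keeping the invariant that every abundant letter retains at least $g_j+1$ copies. The surplus bound guarantees this never stalls: at a stall every abundant letter would have surplus smaller than the current anchor's cost, so the reserve still consumed would exceed $(q-1)MA-A(q-1)=(q-1)(M-1)A$, which already meets or beats the total cost of all anchors, a contradiction. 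When the anchors are exhausted every letter has a representable number of remaining occurrences, which I match by its powers. Note how the bookkeeping explains the bound exactly: $\sum_i(g_i+1)=\sum_i g_i+q$ supplies the representable ``base'' for each letter, and the $(q-1)MA$ term is precisely the reserve needed to pay for at most $(q-1)(M-1)$ anchors at cost $\le A$ each, with room to spare.

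The hard part will be \emph{realizability}: matches must be realized as pairwise disjoint subsequences of the actual word $w$, not merely balanced in letter counts. This is exactly where having both $x_ix_j^{a_{i,j}}$ and $x_j^{b_{j,i}}x_i$ is essential: given an anchor $x_i$ at some position, if too few unused $x_j$ lie to its right I instead use the left-handed word, and since $x_j$ is kept abundant one side always carries enough copies. I would schedule the construction so that all mixed-word matches are performed first, consuming $x_j$'s on one side of each anchor and marking them used, and only afterwards match the surviving occurrences of each letter by powers, which is always possible because powers of a single letter may use any of its remaining occurrences irrespective of position. The delicate point to verify carefully is that this schedule keeps the chosen subsequences disjoint while the surplus accounting still goes through with exactly the claimed slack; I expect this interplay between the positional argument and the counting argument to be the main technical obstacle, whereas the two Frobenius facts and the greedy capacity estimate are routine.
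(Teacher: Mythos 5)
Your overall strategy coincides with the paper's: match the bulk of each letter by its powers, reserve $g_i+1$ occurrences of each abundant letter for a final Frobenius step, and dispose of the at most $m_{i,1}-1$ leftover occurrences of each deficient letter by pairing them with an abundant partner via the two-sided words $x_ix_j^{a_{i,j}}$ and $x_j^{b_{j,i}}x_i$. The gap is in the ``never stalls'' accounting for your greedy pairing. You declare that a stall forces every abundant letter's surplus below the cost $\le A$ of the current anchor. But an anchor of $x_i$ can fail against $x_j$ even when $x_j$'s surplus equals or exceeds $\max(a_{i,j},b_{j,i})$: the spare copies of $x_j$ may be split across the two sides of the anchor with fewer than $b_{j,i}$ on the left and fewer than $a_{i,j}$ on the right (already with $a_{i,j}=b_{j,i}=2$ and two spare copies, one on each side, neither word is realizable). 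The left-or-right pigeonhole only applies once the surplus reaches $a_{i,j}+b_{j,i}-1$, which can be as large as $2A-1$; and your assertion that ``since $x_j$ is kept abundant one side always carries enough copies'' is not justified, since retaining $g_j+1$ copies says nothing about having $a_{i,j}+b_{j,i}-1$ spare ones. With the corrected stall threshold your inequality becomes: consumed reserve $>(q-1)MA-(q-1)(2A-1)$, which for $A\ge 2$ no longer meets the total anchor cost $(q-1)(M-1)A$, so the contradiction evaporates. This is precisely the ``interplay between the positional argument and the counting argument'' you flagged as the main obstacle, and as written it is not resolved.

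The paper closes this hole by replacing the global greedy with an explicit assignment: writing $|y|_{x_d}=g_d+1+\gamma_d\lambda+r_d$ with $\lambda=MA$, it shows $\sum_d\gamma_d$ is at least the number of deficient letters and assigns each deficient letter $x_c$ to a single abundant partner $x_d$ serving at most $\gamma_d$ deficient letters. Each deficient letter thus gets a dedicated budget of $\lambda=MA$ spare copies of its partner, of which at most $(m_{c,1}-2)\max(a_{c,d},b_{d,c})\le(M-2)A$ are spent on its first $r_c-1$ anchors, leaving at least $2A\ge a_{c,d}+b_{d,c}$ at the last one --- exactly enough for the pigeonhole. (Your greedy can likely also be salvaged by bounding the consumption before the $k$-th anchor by $(k-1)A$ and averaging the remaining surplus over the at most $q-1$ abundant letters, but that is a different computation from the one you wrote.) Until the stall analysis is fixed in one of these ways, the quantitative heart of the proof is missing.
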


\begin{proof}

Let $y$ be a $q$-ary word of length at least $$\sum_{i=1}^q g_i +q+(q-1)(\max_i (m_{i, 1}))(\max_{i, j, j\neq i} (a_{i, j}, b_{j, i}))$$.
If $|y|_{x_i}\geq g_i+1$ for all $i$, every $x_i$ in $y$ can be matched by shuffling the set $\{x_i\}$ or if $\{x_i\}\not\subseteq S$, by shuffling the set $\{x_i^{m_{i, 1}}, x_i^{m_{i, 2}}, \ldots, x_i^{m_{i, h_i}}\}$. Furthermore, if $|y|_{x_i}\leq g_i$ for all $i$, then $|y|\leq \sum_{i=1}^q g_i$, a contradiction.
So assume WLOG that $|y|_{x_1}\leq g_1$, $|y|_{x_2}\leq g_2$, \ldots, $|y|_{x_s}\leq g_s$, $|y|_{x_{s+1}}>g_{s+1}$, $|y|_{x_{s+2}}>g_{s+2}$, \ldots, $|y|_{x_q}>g_q$ for some $s$ such that $1\leq s\leq q-1$. 
So $$\sum_{i=1}^s |y|_{x_i}\leq \sum_{i=1}^s g_i$$ which implies
\begin{eqnarray*}
\sum_{i=s+1}^q |y|_{x_i} &\geq& 
\sum_{i=s+1}^q g_i +q+(q-1)(\max_i (m_{i, 1}))(\max_{i, j, j\neq i} (a_{i, j}, b_{j, i}))\\
&\geq& \sum_{i=s+1}^q g_i +q-s+(q-1)(\max_i (m_{i, 1}))(\max_{i, j, j\neq i} (a_{i, j}, b_{j, i})).
\end{eqnarray*}

For simplicity, let $\lambda=(\max\limits_i (m_{i,
  1}))(\max\limits_{i, j, j\neq i} (a_{i, j}, b_{j, i}))$ and assume
for all $i$ such that $s+1\leq i\leq q$, we have $$|y|_{x_i}=g_i+1+\gamma_i\lambda + r_i,$$ where $\gamma_i$ is a non-negative integer and $0\leq r_i< \lambda$.
It follows that 
\begin{eqnarray*}
\sum_{i=s+1}^q |y|_{x_i} &=& \sum_{i=s+1}^q g_i +q-s+\sum_{i=s+1}^{q} \gamma_i\lambda+\sum_{i=s+1}^{q} r_i\\
&\geq& \sum_{i=s+1}^q g_i +q-s+(q-1)\lambda,
\end{eqnarray*}
which implies $$\sum_{i=s+1}^q \gamma_i\lambda + \sum_{i=s+1}^q r_i \geq (q-1)\lambda=(s-1)\lambda+(q-s)\lambda.$$
Since $r_i<\lambda$ for all $i$ between $s+1$ and $q$, we have
$$\sum_{i=s+1}^q r_i<(q-s)\lambda.$$
So we get that
\begin{eqnarray*}
\sum_{i=s+1}^q\gamma_i\lambda &\geq&
                                     (s-1)\lambda+(q-s)\lambda-\sum_{i=s+1}^{q} r_i\\
&>&(s-1)\lambda+(q-s)\lambda-(q-s)\lambda\\
&=&(s-1)\lambda.
\end{eqnarray*}
Since $\sum_{i=s+1}^{q} \gamma_i$ is a non-negative integer, the
previous inequality $\sum_{i=s+1}^{q}\gamma_i\lambda>(s-1)\lambda$
implies that
\begin{equation}\label{gamma summation}
\sum_{i=s+1}^{q}\gamma_i\geq s.
\end{equation}

(When reading the remainder of this proof the reader may wish to refer
to Example~\ref{matching_ex} below.)

For simplicity, let $C=\{x_1, x_2, \ldots, x_s\}$ and $D=\{x_{s+1},
x_{s+2}, \ldots, x_q\}$. Note that if $x_i\in C$, then there exists a
$T_i$ of the form $\eqref{bigTi}$ such that $T_i\subseteq S$. (If
$x_i\in S$, then $g_i = -1$ and so $|y|_{x_i}\geq0$ implies $x_i\in
D$.) If $s<\gamma_{s+1}$, then associate all $s$ letters in $C$ with
$x_{s+1}$; otherwise, associate the first $\gamma_{s+1}$ letters from
$C$ with $x_{s+1}$. If $s<\gamma_{s+1}+\gamma_{s+2}$, then associate
the remaining $s-\gamma_{s+1}$ letters from $C$ with $x_{s+2}$;
otherwise, associate the next $\gamma_{s+2}$ letters from $C$ with
$x_{s+2}$. If $s<\gamma_{s+1}+\gamma_{s+2}+\gamma_{s+3}$, then
associate the remaining $s-\gamma_{s+1}-\gamma_{s+2}$ letters from $C$
with $x_{s+3}$; otherwise, associate the next $\gamma_{s+3}$ letters from $C$ with $x_{s+3}$. Repeat this process until every letter in $C$ has an associated letter in $D$ (which we know is possible by \eqref{gamma summation}). Once this process is completed, every letter in $C$ will be associated with exactly one letter in $D$ and every letter $x_i$ in $D$ will be associated with at most $\gamma_i$ letters of $C$. Let $x_c$ be a letter from $C$ and let $x_d$ be its associated letter in $D$ (note that $\gamma_d\geq1$ by definition). By the division algorithm, we get $|y|_{x_c}=q_cm_{c, 1}+r_c$ where $0\leq r_c< m_{c, 1}$. Consider the first occurrence of a $x_c$ in $y$. If it is preceded by $b_{d, c}$ $x_d$'s, then use a $x_d^{b_{d, c}}x_c$ to match the first $x_c$ and the first $b_{d, c}$ $x_d$'s. If not, then it must be followed by $a_{c, d}$ $x_d$'s. If that is the case, use a $x_cx_d^{a_{c, d}}$ to match the first $x_c$ and the first $a_{c, d}$ $x_d$'s that follow the first $x_c$. Now consider the second occurrence of a $x_c$. If it is preceded by $b_{d, c}$ unmatched $x_d$'s, then use a $x_d^{b_{d, c}}x_c$ to match the second $x_c$ and the first previously unmatched $b_{d, c}$ $x_d$'s. If not, then it must be followed by $a_{c, d}$ unmatched  $x_d$'s. If this is the case, then use a $x_cx_d^{a_{c, d}}$ to match the second $x_c$ and the first $a_{c, d}$ unmatched $x_d$'s that follow the second $x_c$. Repeat this process for the first $r_c$ $x_c$'s. This process will work because once we get to the $r_c$th $x_c$, we still have at least
\begin{eqnarray*} 
&\phantom{=}&(\max_i (m_{i, 1}))(\max_{i, j, j\neq i}(a_{i, j}, b_{j, i}))-(r_c-1)(\max(a_{c, d}, b_{d, c}))\\
&\geq&(\max_i(m_{i, 1}))(\max_{i, j, j\neq i}(a_{i, j}, b_{j, i}))-(m_{c, 1}-1-1)(\max(a_{c, d}, b_{d, c}))\\
&\geq&(\max_i(m_{i, 1}))(\max_{i, j, j\neq i}(a_{i, j}, b_{j, i}))-(\max_i(m_{i, 1})-2)(\max_{i, j, j\neq i}(a_{i, j}, b_{j, i}))\\
&=&2(\max_{i, j, j\neq i}(a_{i, j}, b_{j, i}))\\
&\geq& 2(\max(a_{c, d}, b_{d, c}))
\end{eqnarray*}
unmatched $x_d$'s. This ensures that the $r_c$th $x_c$ is either preceeded by $b_{d, c}$ $x_d$'s or followed by $a_{c, d}$ $x_d$'s. Once we match the first $r_c$ $x_c$'s, use $q_c$ $x_c^{m_{c, 1}}$'s to match the remaining $x_c$'s. This procedure will match every $x_c$ in $y$ and less than $\lambda$ $x_d$'s. Repeat the procedure above for every letter in $C$ with respect to their associated letter in $D$ so that every occurrence of a letter in $C$ in $y$ has been matched. Let $x_i$ be an arbitrary letter in $D$. After every letter in $C$ has been matched in $y$, less than $\gamma_i\lambda$ $x_i$'s have been matched. Thus, for each $i$ such that $s+1\leq i\leq q$, there are at least $g_i+1$ unmatched $x_i$'s in $y$. In which case use $x_i$'s or if $x_i\not\in S$, use the elements in the set $$\{x_i^{m_{i, 1}}, x_i^{m_{i, 2}}, \ldots, x_i^{m_{i, h_i}}\}$$ to match the remaining $x_i$'s in $y$. Every digit in $y$ has now been matched and therefore, $y\in S^{\dagger}$.

\end{proof}

\begin{remark}

If $S^{\dagger}$ is co-finite, then for all $i$, there exists a $T_i$ in $S$ of the form \eqref{smallTi} or \eqref{bigTi}. 
It should be noted that for any $i$, several $T_i$'s of the form
\eqref{bigTi} could be in $S$ and if $\{x_i\}\not\subseteq S$, our
choice of $T_i$ could impact the bound we obtain in Theorem~\ref{upper
  bound}. 
Let $i$ be arbitrary and assume $\{x_i\}\not\subseteq S$. 
To obtain the smallest possible bound in Theorem~\ref{upper bound}, choose a $T_i$ of the form \eqref{bigTi} that includes every string of $x_i$'s in $S$ (ensuring $g_i$ and $m_{i, 1}$ are as small as possible) and strings of the form $x_ix_j^r$ and $x_j^sx_i$ for each $j$ where $r$ and $s$ are as small as possible (ensuring $a_{i, j}$ and $b_{j, i}$ are as small as possible for all $j$).

\end{remark}

The following example illustrates the procedure described in the proof
of Theorem~\ref{upper bound}.

\begin{example}\label{matching_ex}
Consider the set 
\begin{eqnarray*}
S=\{x_1^2, x_1^5, x_2^3, x_2^4, x_3^2, x_3^3, x_4, x_1x_2,x_2^2x_1,x_2x_1^3, x_1x_3, x_3x_1, \\
x_1x_4^2, x_4^3x_1, x_2x_3^2, x_3^2x_2, x_3x_2^3, x_2^2x_3, x_2x_4^3, x_4^3x_2, x_3x_4^3, x_4^2x_3\}. 
\end{eqnarray*}
It can be observed that the set $S$ is of the required construction to make $S^{\dagger}$ co-finite by Theorem~\ref{upper bound}. Furthermore, we get the following values:
\begin{eqnarray*}
&g_1&=(2)(5)-2-5=3,\\
&g_2&=(3)(4)-3-4=5,\\
&g_3&=(2)(3)-2-3=1,\\
&g_4&=-1,\\
&\max\limits_i (m_{i, 1})&=\max(2, 3, 2, 0)=3,\\
&\max\limits_{i, j, j\neq i} (a_{i, j}, b_{i, j})&=\max(1, 2, 3, 1, 1, 1, 1, 1, 2, 3, 0, 0, 2, 2, 3, 2, 3, 3, 0, 0, 3, 2, 0, 0)=3.
\end{eqnarray*}
By Theorem~\ref{upper bound}, $S^{\dagger}$ contains every $4$-ary word of length at least 
$$3+5+1+(-1)+4+(4-1)(3)(3)=39.$$ Consider the word 
$$y=x_3x_3x_4x_2x_4x_3x_3x_4x_1x_4x_2x_3x_4x_3x_4x_2x_4x_1x_3x_4x_4x_1
x_4x_4x_2x_3x_4x_4x_4x_3x_3x_4x_3x_3x_4x_4x_3x_4x_2.$$ 
Since $|y|=39$, we will be able to match all the letters in $y$ using the procedure detailed in Theorem~\ref{upper bound} above. For simplicity, let $i$ denote $x_i$ for all $i$ such that $1\leq i\leq 4$. So we get 
$$y=334243341423434241344144234443343344342.$$

Step 1: Determine which letters go in the sets $C$ and $D$.

Since $|y|_1=3\leq g_1$ and $|y|_2=5\leq g_2$, $1, 2\in C$. Since $|y|_3=13>g_3$ and $|y|_4=18>g_4$, $3, 4\in D$.

Step 2: Determine the associated $\gamma$ for each element of $D$.

Note that $\lambda=9$. Then $$|y|_3=13=1+1+(1)(9)+2=g_3+1+\gamma_3\lambda+r_3\implies \gamma_3=1$$ and $$|y|_4=18=-1+1+(2)(9)+0=g_4+1+\gamma_4\lambda+r_4\implies \gamma_4=2.$$

Step 3: Associate every letter in $C$ with a single letter in $D$.

Since $\gamma_3=1<2=|C|$, associate only the first letter in $C$ with 3 in $D$. 
Thus, 1 is associated with 3. 
Since $|C|\leq\gamma_3+\gamma_4$, associate the remaining letter(s) in $C$ with 4. 
Thus, 2 is associated with 4.

Step 4: Match every occurrence of a letter in $C$ in $y$.

We will match the 1's first. Note that $|y|_1=(1)(2)+1=q_1m_{1,
  1}+r_1$. In $y$, the first occurrence of a 1 is preceded by $b_{3,
  1}=1$ occurrences of 3. Thus, use a 31 to match the first 3 and the first 1. Note that in all the lines below, a letter that is underlined is currently being matched and a letter that is overlined has been previously matched.
$$31\implies \underline{3}3424334\underline{1}423434241344144234443343344342$$
We have matched $r_1$ 1's, so now we need $q_1=1$ copies of $1^{m_{1, 1}}=1^2$ to match the remaining 1's.
$$11\implies \overline{3}3424334\overline{1}42343424\underline{1}344\underline{1}44234443343344342$$
Now we match the 2's. Note that $|y|_2=(1)(3)+2=q_2m_{2, 1}+r_2$. The
first occurrence of a 2 in $y$ is not preceded by $b_{4, 2}=3$
occurrences of 4. Thus, we use a $24^{a_{2, 4}}=24^3$ to match the first 2 and the first three 4's that follow the first 2.
$$2444\implies \overline{3}34\underline{24}33\underline{4}\overline{1}\underline{4}2343424\overline{1}344\overline{1}44234443343344342$$
The second occurrence of a 2 in $y$ is not preceded by three unmatched 4's. Thus, we will use another $24^3$ to match the second 2 and the first three unmatched 4's that follow the second 2.
$$2444\implies \overline{3}34\overline{24}33\overline{414}\underline{2}3\underline{4}3\underline{4}2\underline{4}\overline{1}344\overline{1}44234443343344342$$
We have matched $r_2$ 2's, so now we need $q_2=1$ copies of $2^{m_{2, 1}}=2^3$ to match the remaining 2's.
$$222\implies \overline{3}34\overline{24}33\overline{4142}3\overline{4}3\overline{4}\underline{2}\overline{41}344\overline{1}44\underline{2}3444334334434\underline{2}$$

Step 5: Match every remaining occurrence of a letter in $D$ in $y$.

First we match the remaining 3's. Since there are at least $g_3+1=2$ unmatched 3's remaining, we can use the $3^{m_{3, 1}}=3^2$ and $3^{m_{3, 2}}=3^3$ strings to match the remaining 3's. Since there are twelve unmatched 3's remaining, we can use six $3^2$'s.
$$33\implies \overline{3}\underline{3}4\overline{24}\underline{3}3\overline{4142}3\overline{4}3\overline{4}\overline{241}344\overline{1}44\overline{2}3444334334434\overline{2}$$
$$33\implies \overline{33}4\overline{243}\underline{3}\overline{4142}\underline{3}\overline{4}3\overline{4}\overline{241}344\overline{1}44\overline{2}3444334334434\overline{2}$$
$$33\implies \overline{33}4\overline{2433414234}\underline{3}\overline{4}\overline{241}\underline{3}44\overline{1}44\overline{2}3444334334434\overline{2}$$
$$33\implies \overline{33}4\overline{2433414234342413}44\overline{1}44\overline{2}\underline{3}444\underline{3}34334434\overline{2}$$
$$33\implies \overline{33}4\overline{2433414234342413}44\overline{1}44\overline{23}444\overline{3}\underline{3}4\underline{3}34434\overline{2}$$
$$33\implies \overline{33}4\overline{2433414234342413}44\overline{1}44\overline{23}444\overline{33}4\overline{3}\underline{3}44\underline{3}4\overline{2}$$
Next we match the remaining 4's. Since $4\in S$, clearly we can match the twelve remaining unmatched 4's using twelve 4's (for simplicity, the line below symbolizes repeating the process twelve times).
$$4\implies \overline{33}\underline{4}\overline{2433414234342413}\underline{44}\overline{1}\underline{44}\overline{23}\underline{444}\overline{33}\underline{4}\overline{3}\overline{3}\underline{44}\overline{3}\underline{4}\overline{2}$$
Now we have matched every letter in $y$ and it follows that $y\in S^{\dagger}$.
\end{example}

\begin{corollary}
Assume $S^{\dagger}$ is co-finite and let $n$ be the length of a longest word in the set $S$. 
Then the length of a longest word not in $S^{\dagger}$ is less or equal to $(2q-1)n^2-(5q-2)n+3q-2$.
\end{corollary}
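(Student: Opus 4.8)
The plan is to derive the corollary directly from Theorem~\ref{upper bound} by expressing every parameter appearing in that bound in terms of $n$. Since $S^\dagger$ is co-finite, each collection $\mathscr{T}_i$ is non-empty (by Lemma~\ref{cofinite_only_if}), so I may fix one admissible subset $T_i$ for every $i$ and apply Theorem~\ref{upper bound}. Abbreviating $M = \max_i m_{i,1}$ and $A = \max_{i,j,\,j\neq i}(a_{i,j},b_{j,i})$, the theorem guarantees that $S^\dagger$ contains every word of length at least $\sum_{i=1}^q g_i + q + (q-1)MA$. Hence the length of a longest word not in $S^\dagger$ is at most $\sum_{i=1}^q g_i + q + (q-1)MA - 1$, and it suffices to bound each of $\sum g_i$, $M$, and $A$ using only the fact that $n$ is the length of a longest word of $S$.

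The second step records the elementary size constraints forced by $n$. Any power $x_i^{m_{i,k}}$ lying in $S$ has length $m_{i,k}\leq n$, so in particular $m_{i,h_i}\leq n$; since form~\eqref{bigTi} requires $h_i\geq 2$ and hence $m_{i,1}<m_{i,h_i}$, we obtain the sharper bound $m_{i,1}\leq n-1$. Likewise each word $x_ix_j^{a_{i,j}}$ and $x_j^{b_{j,i}}x_i$ has length $1+a_{i,j}\leq n$ (resp. $1+b_{j,i}\leq n$), giving $a_{i,j},b_{j,i}\leq n-1$. Consequently $M\leq n-1$ and $A\leq n-1$, so $(q-1)MA\leq (q-1)(n-1)^2$. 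In the case $\{x_i\}\in\mathscr{T}_i$ the convention $m_{i,1}=a_{i,j}=b_{j,i}=0$ and $g_i=-1$ only makes these quantities smaller, so the same bounds persist.

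The third step bounds $\sum g_i$ via Schur's inequality~\eqref{Xu}. For each $i$ with $\{x_i\}\notin\mathscr{T}_i$, applying \eqref{Xu} to $g_i=g(m_{i,1},\ldots,m_{i,h_i})$ gives $g_i\leq m_{i,1}m_{i,h_i}-m_{i,1}-m_{i,h_i}$. Since $f(a,b)=ab-a-b$ is increasing in each variable on $a,b\geq 2$, substituting $m_{i,1}\leq n-1$ and $m_{i,h_i}\leq n$ yields $g_i\leq (n-1)n-(n-1)-n = n^2-3n+1$; the bound holds trivially when $g_i=-1$. Summing over $i$ gives $\sum_{i=1}^q g_i\leq q(n^2-3n+1)$.

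The final step assembles the pieces: substituting these estimates into $\sum g_i + q + (q-1)MA - 1$ produces $q(n^2-3n+1)+q+(q-1)(n-1)^2-1$, and a routine expansion collects this into $(2q-1)n^2-(5q-2)n+(3q-2)$, as claimed. I expect the only real subtlety to be getting the coefficients exactly right, which hinges on using the sharp bound $m_{i,1}\leq n-1$ (and correspondingly $M\leq n-1$) rather than the cruder $m_{i,1}\leq n$: the strict inequality $m_{i,1}<m_{i,h_i}$ forced by $h_i\geq 2$ is precisely what produces the linear coefficient $-(5q-2)$ instead of a weaker one. No deeper difficulty is anticipated, since everything reduces to the already-established Theorem~\ref{upper bound} together with bookkeeping on word lengths.
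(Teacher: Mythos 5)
Your proposal is correct and follows essentially the same route as the paper's own proof: apply Theorem~\ref{upper bound}, bound $\max_i m_{i,1}$ and $\max_{i,j}(a_{i,j},b_{j,i})$ by $n-1$, bound each $g_i$ by $n^2-3n+1$ via Schur's inequality~\eqref{Xu}, and expand. Your explicit justifications of $m_{i,1}\leq n-1$ (via $m_{i,1}<m_{i,h_i}\leq n$) and of the monotonicity of $ab-a-b$ are details the paper leaves implicit, but the argument is the same.
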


\begin{proof}

For simplicity, let $k$ denote the length of a longest word not in $S^{\dagger}$. By Lemma~\ref{cofinite_only_if}, for each $i \in\{1,2,\ldots,q\}$, the collection $\mathscr{T}_i$ is non-empty. By Theorem~\ref{upper bound} (if $\{x_i\}\not\subseteq\mathscr{T}_i$ for some $i$, then let $T_i$ be a subset of $S$ of the form \eqref{bigTi}), $k$ is at most $$\sum_{i=1}^q g_i +q+(q-1)(\max_i(m_{i, 1}))(\max_{i, j,  j\neq i} (a_{i, j}, b_{j, i}))-1.$$
It is clear that $\max\limits_i (m_{i, 1})\leq n-1$ and $\max\limits_{i, j, j\neq i} (a_{i, j}, b_{j, i})\leq n-1$. 
Each $g_i$ is either $-1$ or $$g(m_{i, 1}, m_{i, 2}, \ldots, m_{i, h_i})\leq(m_{i, 1})(m_{i, h_i})-m_{i, 1}-m_{i, h_i}\leq (n-1)n-(n-1)-n=n^2-3n+1$$ 
by \eqref{Xu}. Since $n^2-3n+1\geq-1$ for all $n\in\mathbb{Z}$, it follows that 
\begin{eqnarray*}
k&\leq&\sum_{i=1}^q g_i +q+(q-1)(\max_i (m_{i, 1}))(\max_{i, j, j\neq i} (a_{i, j}, b_{j, i}))-1\\
&\leq&q(n^2-3n+1)+q+(q-1)(n-1)^2-1\\
&=&qn^2-3qn+q+q+qn^2-n^2-2qn+2n+q-2\\
&=&(2q-1)n^2-(5q-2)n+3q-2.
\end{eqnarray*}

\end{proof}

\begin{remark}
The last corollary implies that the length of a longest word not in $S^{\dagger}$ is at most quadratic in the length of a longest word in $S$.
\end{remark}

\begin{theorem}
Assume $S^\dagger$ is co-finite and let $m$
be the length of a shortest word in $S$.
\begin{itemize}
\item If $m = 1$ then $|S| \geq q$.
\item If $m = 2$ then $|S| \geq q^2 + q$.
\item If $m \geq 3$ then $|S| \geq 2q^2$.
\end{itemize}
In each case the bound on $|S|$ is tight.
\end{theorem}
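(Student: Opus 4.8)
The plan is to read off all three lower bounds from the structural characterization and then confirm tightness by explicit examples. Since $S^\dagger$ is co-finite, Theorem~\ref{cofinite_characterization} supplies, for each $i$, a subset $T_i \subseteq S$ of the form \eqref{smallTi} or \eqref{bigTi}. The central bookkeeping device is to split the words I count into \emph{power words} (powers $x_i^k$ of a single letter) and \emph{mixed words} (those containing at least two distinct letters). Power words for distinct letters are distinct, mixed words whose letter sets are distinct are distinct, and no power word equals a mixed word; hence it suffices to count, independently, how many power words and how many mixed words each case forces into $S$.

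For $m=1$, I would observe that for every $i$ the set $S$ contains at least one power of $x_i$: the word $x_i$ itself if $T_i$ has the form \eqref{smallTi}, or $x_i^{m_{i,1}}$ if $T_i$ has the form \eqref{bigTi}. These $q$ power words are pairwise distinct, so $|S|\geq q$.

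For $m=2$ and for $m\geq 3$ no single letter lies in $S$, so every $T_i$ has the form \eqref{bigTi}. Each such $T_i$ contributes the $h_i\geq 2$ distinct powers $x_i^{m_{i,1}},\ldots,x_i^{m_{i,h_i}}$, giving at least $2q$ power words. The mixed words I count pair by pair: for a fixed unordered pair $\{i,j\}$, the sets $T_i$ and $T_j$ force $x_ix_j^{a_{i,j}}$, $x_j^{b_{j,i}}x_i$, $x_jx_i^{a_{j,i}}$ and $x_i^{b_{i,j}}x_j$ into $S$. The step I expect to be the crux is deciding how many of these four are necessarily distinct. When $m=2$ I can only guarantee two---one beginning with $x_i$ and one beginning with $x_j$, which differ in their first letter---since for instance $x_ix_j^{a_{i,j}}$ and $x_i^{b_{i,j}}x_j$ both degenerate to $x_ix_j$ when the relevant exponents equal $1$. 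When $m\geq 3$ every word has length at least $3$, forcing all of the exponents to be at least $2$; then the two words beginning with $x_i$ are separated by their second letter ($x_j$ versus $x_i$), and likewise for the two beginning with $x_j$, so all four are distinct. Summing over the $\binom{q}{2}$ pairs gives at least $q^2-q$ mixed words when $m=2$ and at least $2q^2-2q$ when $m\geq 3$; adding the $2q$ power words yields $|S|\geq q^2+q$ and $|S|\geq 2q^2$ respectively.

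Finally I would establish tightness by exhibiting sets meeting each bound. For $m=1$ take $S=\{x_1,\ldots,x_q\}$, so that $S^\dagger=\Sigma^*$. For $m=2$ take $S=\bigcup_i\{x_i^2,x_i^3\}\cup\bigcup_{i\neq j}\{x_ix_j\}$, and for $m\geq 3$ take $S=\bigcup_i\{x_i^3,x_i^4\}\cup\bigcup_{i\neq j}\{x_ix_j^2,x_i^2x_j\}$. In each case one checks that every letter admits a $T_i$ of the form \eqref{bigTi} drawn from $S$ (with coprime power exponents, and with the cross terms $x_ix_j^{a_{i,j}}$ and $x_j^{b_{j,i}}x_i$ present for every $j\neq i$), so that $S^\dagger$ is co-finite by Theorem~\ref{cofinite_characterization}; a direct count then shows $|S|$ equals the claimed value and that the shortest word has the prescribed length.
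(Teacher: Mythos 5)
Your proof is correct and follows essentially the same route as the paper: extract the forced subsets $T_i$ from the characterization, count the forced powers ($q$ of them when $m=1$, $2q$ otherwise) together with the forced mixed words (two per unordered pair when $m=2$, four when $m\geq 3$ since every exponent is then at least $2$), and exhibit the same extremal sets; the paper phrases the $m\geq 3$ count as pairwise disjointness of the $T_i$'s, but that is the same verification you carry out pair by pair. The one point to tidy up is your tightness witness for the case $m\geq 3$: as written, with $\{x_i^3,x_i^4\}$ and $\{x_ix_j^2,x_i^2x_j\}$, its shortest word has length $3$, so it only witnesses tightness at $m=3$; replace the exponents by $m$, $m+1$ and $m-1$ (as the paper does) to obtain a witness for every $m\geq 3$.
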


\begin{proof}

It is clear that $S$ must contain at least one word with just $x_i$'s for every $i$. It follows that $|S|\geq q$. The set $S=\{x_1, x_2, \ldots, x_q\}$ is a set of size $q$ such that $S^{\dagger}$ is co-finite. 
Assume $m\geq2$. Then $\{x_i\}\not\subseteq S$ for all $i$. To ensure every sufficiently long string of $x_i$'s is in $S^{\dagger}$ for all $i$, $S$ must have at least two strings of $x_i$'s for all $i$. Furthermore, to ensure $x_ix_j^s\in S^{\dagger}$ for all $i$, $j$ and sufficiently large $s$ such that $i\neq j$, $S$ must contain a string consisting of a single $x_i$ followed by at least one $x_j$ for all $i$ and $j$ such that $i\neq j$. Therefore, $|S|\geq 2q+q(q-1)=q^2+q$. It can be observed (by Theorem~\ref{upper bound}) that the set $$S=\bigcup_{i=1}^q\{x_i^2, x_i^3\} \cup \bigcup_{\substack{j=1 \\ j\neq i}}^{q}\bigcup_{i=1}^q\{x_ix_j\}$$ is a set of size $q^2+q$ such that $S^{\dagger}$ is co-finite. 
Now assume $m\geq3$. By Lemma~\ref{cofinite_only_if}, for all $i \in
\{1,2,\ldots,q\}$, there exists a $T_i$ in $S$ of the form \eqref{bigTi}. Furthermore, all of the $T_i$'s are pairwise disjoint (i.e. if $i\neq j$, then $T_i\cap T_j=\emptyset$). Thus, $|S|\geq q|T_i|\geq q(2+2(q-1))=2q^2$. 
It can be observed (by Theorem~\ref{upper bound}) that the set $$S=\bigcup_{i=1}^q\{x_i^m, x_i^{m+1}\} \cup \bigcup_{\substack{j=1 \\ j\neq i}}^{q}
\bigcup_{i=1}^q\{x_ix_j^{m-1}, x_j^{m-1}x_i\}$$ is a set of size $2q^2$ such that $S^{\dagger}$ is co-finite.

\end{proof}

\begin{remark} Note that the lower bound on $|S|$ does not depend on
  $m$.  This is significantly different from the case of
  concatenation: for $S^*$ to be co-finite, the cardinality of $S$
  must be exponential in $m$.
\end{remark}

\section{A prototypical case}

To find a general formula for the length of a longest word not in
$S^{\dagger}$ for an arbitrary set $S$ such that $S^{\dagger}$ is
co-finite is a difficult task. So instead we restrict our efforts to a
family of what we deem to be the simplest constructions of $S$ such
that $S^{\dagger}$ is co-finite. We define each $T_i$ in the following
way: $$T_i=\{x_i^m, x_i^{m+1}\} \cup \bigcup_{\substack{j=1 \\ j\neq
    i}}^{q}\{x_ix_j^{m-1}, x_j^{m-1}x_i\}.$$ This results in a set $S$
of the form $$S=\bigcup_{i=1}^{q} \{x_i^m, x_i^{m+1}\} \cup \bigcup_{\substack{j=1 \\ j\neq i}}^{q} \bigcup_{i=1}^{q} \{x_ix_j^{m-1}, x_j^{m-1}x_i\}.$$ The cases when $q=1$ and $m=1$ are both trivial so we restrict our attention to when $q, m\geq2$. In this section we prove that the length of the longest word not in $S^{\dagger}$ is $2q-1$ when $m=2$ and $qm^2-2qm+2m-1$ when $m\geq 3$. We also find some elementary bounds on the number of words not in $S^{\dagger}$. For the rest of this section, unless explicitly stated, assume the set $S$ has the construction above.

\begin{theorem}

If $q\geq1$ and $m=2$, the length of a longest word not in $S^{\dagger}$ is $2q-1$.

\end{theorem}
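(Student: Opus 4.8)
The plan is to exploit the very special structure that $S$ acquires when $m=2$. First I would observe that here $S$ consists of \emph{every} word of length $2$ over the alphabet (the words $x_i^2$ together with the words $x_ix_j$ and $x_jx_i$ for $i\neq j$), plus exactly the $q$ cube words $x_1^3,\ldots,x_q^3$ as the only words of length $3$. Consequently, in any matching of a word $y$ using words of $S$, each match covers either $2$ positions (and these may be an \emph{arbitrary} pair of positions, since every length-$2$ word lies in $S$) or $3$ positions all carrying the same letter (since the cubes are the only length-$3$ words in $S$).

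From this I would extract a clean membership criterion. Using only length-$2$ matches one can cover any word of even length by pairing up its positions arbitrarily, so every even-length word lies in $S^\dagger$. For odd-length words a parity argument applies: since each length-$2$ match covers an even number of positions and each cube match covers $3$, a matching of an odd-length word must use an odd, and in particular at least one, number of cube matches; each such match requires three occurrences of a single letter. Conversely, if some letter occurs at least three times in an odd-length $y$, peeling off one cube on three of those occurrences leaves an even-length word, which is matchable by pairs. Hence an odd-length word lies in $S^\dagger$ if and only if some letter occurs at least three times in it.

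With the criterion in hand, the two bounds follow by counting. For the lower bound I would exhibit a word of length $2q-1$ that is \emph{not} in $S^\dagger$, namely one in which $q-1$ of the letters occur exactly twice and the remaining letter occurs once; this word has odd length $2q-1$, no letter occurs three times, and so by the criterion it is not matchable. For the upper bound I would show that every word of length at least $2q$ lies in $S^\dagger$: if its length is even this is immediate, and if its length is odd (hence at least $2q+1$) the pigeonhole principle forces some letter among the $q$ available letters to occur at least $\lceil (2q+1)/q\rceil = 3$ times, placing the word in $S^\dagger$ by the criterion. Combining the two bounds gives that the longest word not in $S^\dagger$ has length exactly $2q-1$.

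I expect the only real subtlety to be the parity step in the odd-length criterion: one must use the fact that the length-$3$ matches available are \emph{only} the cubes, so that an odd-length word genuinely forces three equal letters rather than merely three letters in some pattern. Everything else — the even-length matchability by pairing, the explicit extremal word, and the pigeonhole estimate — is routine once the criterion is established.
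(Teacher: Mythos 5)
Your proposal is correct, and it takes a genuinely different route from the paper. The paper proves the upper bound by induction on $q$: given a $(q{+}1)$-ary word of length at least $2q{+}2$, either every letter occurs $0$ or at least $2$ times (in which case the word is matched letter-by-letter using $\{x_i^2,x_i^3\}$ and $g(2,3)=1$), or some letter occurs exactly once, in which case a single $x_ix_j$ or $x_jx_i$ peels that letter off and reduces to the $q$-ary case. You instead prove an exact membership criterion for this particular $S$ --- every even-length word is in $S^\dagger$ because all length-$2$ words lie in $S$ and positions can be paired arbitrarily, while an odd-length word is in $S^\dagger$ if and only if some letter occurs at least three times (the parity argument forcing at least one cube match, and cubes being the only odd-length words available) --- and then finish with a pigeonhole count $\lceil (2q+1)/q\rceil = 3$. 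Your criterion is strictly more information than the theorem asks for: it describes the entire complement of $S^\dagger$ in the $m=2$ case, not just the length of its longest element, and it avoids induction entirely. The paper's inductive style has the advantage of matching the template it reuses for the harder $m\geq 3$ case. Both arguments use the same parity observation for the lower bound, and the paper's extremal word $x_1^2x_2^2\cdots x_{q-1}^2x_q$ is one instance of the family you describe. All steps in your argument check out, including the subtle point you flag yourself: since the only length-$3$ words in $S$ are the cubes $x_i^3$, an odd-length word genuinely requires three equal letters to be matchable.
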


\begin{proof}

Consider the word $y=x_1^2x_2^2\cdots x_{q-1}^2x_q$. It is clear that
we cannot use a $x_i^3$ to match any letters in $y$ for any $i$. The
only other words in $S$ are of even length. Since it is impossible to
match a word of odd length with only even words, $y\not\in
S^{\dagger}$. We claim that every $q$-ary word of length at least $2q$
is in $S^{\dagger}$. When $q=1$, clearly every unary word of length at
least 2 is in $S^{\dagger}$ (since $g(2, 3)=1$). Assume the claim
holds for some $q\geq1$. It suffices to show that the claim holds for
$q+1$. Let $y$ be a $(q+1)$-ary word of length at least $2(q+1)=2q+2$. If every letter has either 0 or at least 2 occurrences, then $y$ can be matched using the set $$\bigcup_{i=1}^{q+1} \{x_i^2, x_i^3\}.$$ If not, then there exists an $i$ such that $|y|_{x_i}=1$. Let $x_j$ be another letter in $y$. Then we can either use an $x_ix_j$ or an $x_jx_i$ to match the only $x_i$ and one of the $x_j$'s. Thus, $y$ is a $q$-ary word of length at least $2q$ (which we know is in $S^{\dagger}$ by our induction hypothesis) shuffled with a word from $S$. Thus $y\in S^{\dagger}$ and the result follows by induction on $q$.
\end{proof}

\begin{theorem}\label{simple_case_bound}

If $q\geq 2$ and $m\geq 3$, the length of a longest word not in $S^{\dagger}$ is $qm^2-2qm+2m-1$.

\end{theorem}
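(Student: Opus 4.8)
The plan is to prove the two matching bounds separately: first exhibit a word of length $qm^2-2qm+2m-1$ that is not in $S^\dagger$, and then show that every word of length $qm^2-2qm+2m$ is in $S^\dagger$. Throughout I would exploit one structural fact peculiar to this $S$: every word of $S$ has length $m$ or $m+1$, the length-$(m+1)$ words being exactly the pure blocks $x_i^{m+1}$, while every mixed word $x_ix_j^{m-1}$, $x_j^{m-1}x_i$ has length exactly $m$. Consequently, if $y\in S^\dagger$ then $|y|$ is a non-negative combination of $m$ and $m+1$, and in fact the number of length-$(m+1)$ words used in any matching is congruent to $|y|$ modulo $m$; since $|y|\equiv m-1\pmod m$ at the extremal length but $\equiv 0$ at the threshold length, this already signals an asymmetry between the two bounds. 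I would also use repeatedly that a letter occurring fewer than $m$ times cannot be touched by any pure word, so all of its occurrences must be consumed by mixed words, and that (matches being subsequences) order matters only for the mixed words.

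For the lower bound I would single out one ``deficient'' letter, say $x_1$, occurring exactly $m-1$ times, placing one $x_1$ at the very beginning and one at the very end of the witness. This flanking forbids the cheap absorption of the $x_1$'s by a single block word $x_jx_1^{m-1}$ or $x_1^{m-1}x_j$ (no letter lies before all of the $x_1$'s, nor after all of them), so each of the $m-1$ copies of $x_1$ must instead be absorbed by its own single-letter mixed word, each draining $m-1$ occurrences from some abundant letter; in total $(m-1)^2$ occurrences are drained from $x_2,\dots,x_q$. The multiplicities of $x_2,\dots,x_q$ are then chosen so that no admissible distribution of this drain — together with any cross-absorptions among the abundant letters — can avoid leaving some letter at a non-representable multiplicity, the archetype being the Frobenius number $m^2-m-1=g(m,m+1)$. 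For $q=2$ this is transparent: the word $x_2x_1^{2m^2-3m}x_2^{m-2}$ works, since the $(m-1)^2$ drained copies reduce the unique abundant letter from $2m^2-3m$ down to exactly $m^2-m-1$, which no pure $x_1$-words can match, and its length is $2m^2-2m-1$ as required. For $q=3$, $m=3$ the analogous witness has multiplicities $(m-1,m^2-2m,2m^2-3m)=(2,3,9)$, and one checks directly, via the forced decomposition into two pure blocks of length $m+1$ and two words of length $m$, that it is not matchable.

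For the upper bound I would argue by induction on $q$, the increment in the threshold being $B_q-B_{q-1}=m(m-2)$ where $B_q=qm^2-2qm+2m$. Given a $q$-ary word $y$ with $|y|\ge B_q$, let $x_q$ be a least frequent letter. If $c_q:=|y|_{x_q}\ge m^2-m$ then every letter occurs at least $m^2-m$ times and is therefore $\{m,m+1\}$-representable, so $y$ is matched entirely by pure words. Otherwise $c_q\le m^2-m-1$, and I would match all of the $x_q$'s using pure $x_q$-words after singling out a bounded number ($\le m-1$) of them into mixed words with an abundant letter — just enough to make the residual multiplicity of $x_q$ representable; deleting $x_q$ then leaves a $(q-1)$-ary word, to which the induction hypothesis applies provided its length is still at least $B_{q-1}$. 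The efficiency gained over Theorem~\ref{upper bound} comes precisely from the fact that here $a_{i,j}=b_{j,i}=m-1$ and $m_{i,1}=m$ are as small as possible, so the quantity drained per fix is only $(m-1)^2$ rather than the generic $\lambda=m(m-1)$ charged per deficient occurrence.

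The main obstacle, in both directions, is the coupling between counting and order. For the lower bound the delicate point is proving that the witness admits \emph{no} matching whatsoever: because matches are subsequences, the flanking argument only blocks the block-type absorptions, and one must still show by a global counting argument — balancing the total number of ``singled out'' occurrences against the total number of ``block'' occurrences, and accounting for all cross-absorptions among the abundant letters — that every remaining pattern strands some letter at a non-representable multiplicity. Choosing the multiplicities of $x_2,\dots,x_q$ so that this holds simultaneously for all $q$ and $m$, rather than only in the base cases, is the crux, and is where the bulk of the case analysis will go. For the upper bound the corresponding difficulty is the range $m^2-2m<c_q\le m^2-m-1$, where removing the least frequent letter can drop the length below $B_{q-1}$; handling this range tightly — so that the constant $2m-1$ in $B_q$ is exactly right and meets the lower bound — is what makes the threshold $B_q$, and hence the value $qm^2-2qm+2m-1$, sharp.
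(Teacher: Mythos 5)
Your two-sided strategy matches the paper's in outline, and your $q=2$ computations are right: the witness $x_2x_1^{2m^2-3m}x_2^{m-2}$ does work, since the flanking forces each of the $m-1$ copies of $x_2$ to be absorbed by its own $x_2x_1^{m-1}$ or $x_1^{m-1}x_2$, draining $(m-1)^2$ copies of $x_1$ and stranding exactly $m^2-m-1=g(m,m+1)$ of them. But for general $q$ you never produce a witness: you list properties its multiplicities should have and then concede that choosing them for all $q$ and $m$ ``is the crux, and is where the bulk of the case analysis will go.'' That is the content of the lower bound, not a detail, and your one concrete attempt beyond $q=2$ is flawed: a multiplicity vector does not determine membership in $S^\dagger$, and for $(q,m)=(3,3)$ the multiplicities $(2,3,9)$ admit orderings that \emph{are} matchable --- e.g.\ $x_3x_1^2x_3^8x_2^3 = (x_3x_1^2)\shuffle (x_3^4)\shuffle(x_3^4)\shuffle (x_2^3)$ --- precisely because the letter with $m^2-2m=m$ occurrences can be eaten by a single pure word, a branch your flanking argument does not close. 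The paper's Lemma~\ref{simple_case_existence} avoids all of this with the explicit word $x_1^{m-2}x_2^{m-2}\cdots x_{q-1}^{m-2}x_q^{qm^2-3qm+3m+2q-4}x_1$, in which every letter other than $x_q$ occurs at most $m-1<m$ times and is positioned so that each of its occurrences is forced to drain $m-1$ copies of $x_q$, leaving exactly $g(m,m+1)$ of them unmatched.

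The upper bound has the same character of gap. Your induction (peel off a least frequent letter $x_q$, absorb at most $m-1$ of its occurrences into mixed words, delete it, apply the hypothesis) fails exactly where you say it does: for $m^2-2m<c_q\le m^2-m-1$ the deletion can cost up to $2m^2-3m$ letters while the threshold drops by only $m^2-2m$, so the residual word can be too short for the induction hypothesis. You flag this range but do not handle it, and handling it is where all the work lies. The paper sidesteps the budget problem by inducting on a different statement: Lemma~\ref{simple_case_implication} first strips pure words (via Lemma~\ref{enough_of_a_letter} and Lemma~\ref{matching_multiples_of_m}) to reduce to words of the \emph{exact} length $(q-1)m^2-2qm+3m$, a multiple of $m$, and Lemma~\ref{simple_case_every_word} proves by induction on $q$ that all words of that exact length are in $S^\dagger$, using a pigeonhole choice of an abundant partner letter and a three-case analysis of the residue $r$ of the deficient letter modulo $m$ (including an end-of-word subcase when $r=m-1-\gamma$). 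Until you exhibit a general witness with a complete non-matchability argument and close the tight range of your induction, the proposal is a plan rather than a proof.
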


\begin{proof}

Follows directly from Lemmas~\ref{simple_case_existence},
\ref{simple_case_implication} and \ref{simple_case_every_word} below.

\end{proof}

\begin{lemma}\label{simple_case_existence}

If $q\geq 2$ and $m\geq 3$, there exists a $q$-ary word $y$ of length $qm^2-2qm+2m-1$ such that $y\not\in S^{\dagger}$.

\end{lemma}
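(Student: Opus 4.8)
The plan is to exhibit an explicit word $y$ of length $qm^2-2qm+2m-1$ and to prove directly that it admits no matching by the words of $S$. I will choose $y$ so that one letter is forced, by its sheer scarcity, to be covered in an essentially unique way, and so that this forced behaviour drives the count of a heavy letter exactly onto the Frobenius number $g(m,m+1)=m^2-m-1$, which by Sylvester's formula is not a non-negative combination of $m$ and $m+1$. In the base case $q=2$ I take $y=x_1^{a}x_2^{2m^2-3m}x_1^{b}$ with $a+b=m-1$ and $a,b\ge 1$ (possible since $m\ge 3$); then $|y|=(m-1)+(2m^2-3m)=2m^2-2m-1$, which is the required length. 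For general $q$ the word follows the same template: a single \emph{deficient} letter occurring only $m-1$ times, with its first occurrence the first letter of $y$ and its last occurrence the last letter of $y$, together with heavy blocks of the remaining letters whose lengths are tuned so that the consumption forced by the deficient letter strands one block on a non-representable value.

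First I would record the two representability facts I need. By Sylvester, $g(m,m+1)=m^2-m-1$; and any integer $N$ with $1\le N\le m-1$, as well as $N=m^2-m-1$, fails to be a non-negative integer combination of $m$ and $m+1$. Since \emph{every} word of $S$ has length $m$ or $m+1$, a letter occurring only $m-1$ times cannot be covered by any pure blocks $x_i^{m}$ or $x_i^{m+1}$; hence each of its occurrences must be consumed by a mixed word $x_ix_j^{m-1}$ or $x_j^{m-1}x_i$.

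Next I would carry out the ordering argument, which is the heart of the proof. Because the first and last letters of $y$ are occurrences of the deficient letter, no other position of $y$ has all $m-1$ of its occurrences strictly to one side; this blocks every match in which the deficient letter plays the \emph{bulk} role (the $x_j^{m-1}$ part of a mixed word). Consequently the deficient letter can only appear as the \emph{single} letter of a mixed word, each such match deleting exactly $m-1$ occurrences of a heavy letter, and since none of its $m-1$ occurrences can be taken by a pure block, all $m-1$ must be consumed in this way. I would then show the counts are arranged so that, after these forced deletions, some heavy block is left with exactly $m^2-m-1$ unmatched occurrences (for $q=2$ the block $x_2^{2m^2-3m}$ loses $(m-1)^2$ occurrences and is left with $2m^2-3m-(m-1)^2=m^2-m-1$) and has no remaining partner to offload onto; by the representability facts this residue cannot be covered, so $y\notin S^\dagger$.

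The step I expect to be the main obstacle is making the ordering argument airtight for $q\ge 3$. Once several heavy letters are present the deficient letter could offload its occurrences onto \emph{different} blocks, and the heavy blocks could in turn shuffle occurrences among themselves through auxiliary mixed words, so I must design the ordering together with the block lengths so that \emph{every} distribution of the forced matches and \emph{every} auxiliary mixed match still leaves some letter at a non-representable count. This is exactly where the problem is genuinely about the shuffle rather than about letter-counts alone: the obstruction survives only because the interleaving constraints forbid the rescuing matches that a purely arithmetic (commutative) analysis would allow. I would resolve it by a case analysis on how many occurrences of the deficient letter are offloaded onto each heavy block, bounding in each case how many occurrences of some block can then be matched and invoking the bound \eqref{Xu} of Schur together with Sylvester's formula to produce a surviving gap.
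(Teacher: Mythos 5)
Your $q=2$ word works and is essentially the paper's construction specialized to two letters (the paper takes $x_1^{m-2}x_2^{2m^2-3m}x_1$, i.e.\ your $a=m-2$, $b=1$): the positional argument that $x_1$ can never play the bulk role is correct, each of the $m-1$ occurrences of $x_1$ is forced to consume $m-1$ occurrences of $x_2$, and the residue $2m^2-3m-(m-1)^2=m^2-m-1=g(m,m+1)$ cannot be covered. The problem is that for $q\geq 3$ you give no explicit word, and the template you propose --- one deficient letter together with $q-1$ \emph{heavy} blocks --- runs straight into the obstacle you yourself flag and do not resolve. With several heavy letters, the matcher may distribute the $m-1$ forced mixed matches among the heavy blocks in many different ways, and the heavy blocks can additionally trade occurrences among themselves via words such as $x_ix_j^{m-1}$ (an $x_i$ followed by $m-1$ later $x_j$'s is readily available when both blocks are large). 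You would need \emph{every} such distribution, combined with \emph{every} pattern of heavy-to-heavy trades, to strand some letter at a non-representable count; since the heavy counts sum to $qm^2-2qm+m$, a multiple of $m$, and each heavy letter individually has enormous arithmetic slack, it is far from clear that any tuning of the block lengths achieves this, and the promised case analysis is exactly the missing content of the proof.

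The paper's construction inverts your template and thereby eliminates the flexibility: it takes
$y=x_1^{m-2}x_2^{m-2}\cdots x_{q-1}^{m-2}x_q^{qm^2-3qm+3m+2q-4}x_1$, so that $q-1$ letters are deficient ($x_1$ occurs $m-1$ times, split between the two ends; each of $x_2,\dots,x_{q-1}$ occurs $m-2$ times) and only $x_q$ is heavy. Then no deficient letter can serve as the bulk of a mixed word (the $x_i$ with $2\le i\le q-1$ have too few occurrences, and the occurrences of $x_1$ are positioned so that neither $x_1^{m-1}x_j$ nor $x_jx_1^{m-1}$ can be matched), so every occurrence of every deficient letter must offload exactly $m-1$ occurrences of $x_q$, the total consumption $(q-1)(m-2)(m-1)+(m-1)$ is completely determined, and $x_q$ is left with exactly $m^2-m-1=g(m,m+1)$ occurrences with no partner remaining. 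If you want to salvage your approach, replace ``one deficient letter and $q-1$ heavy blocks'' by ``$q-1$ deficient letters and one heavy block''; as written, the general case is a genuine gap.
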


\begin{proof}

Let $y=x_1^{m-2}x_2^{m-2}\cdots x_{q-1}^{m-2}x_q^{qm^2-3qm+3m+2q-4}x_1$ and assume for the sake of contradiction that $y\in S^{\dagger}$. By observation, the only way we can match the $x_1$'s is if we use $m-2$ $x_1x_q^{m-1}$'s and a single $x_q^{m-1}x_1$. Furthermore, the only way we can match the $x_2$'s is using $m-2$ $x_2x_q^{m-1}$'s. We repeat this process until we match the $x_{q-1}$'s with $m-2$ $x_{q-1}x_q^{m-1}$'s. This leaves $qm^2-3qm+3m+2q-4-(q-1)(m-2)(m-1)-(m-1)=m^2-m-1$ unmatched $x_q$'s and since $g(m, m+1)=m^2-m-1$, it is impossible to match the remaining unmatched $x_q$'s. It follows that $y\not\in S^{\dagger}$.

\end{proof}

\begin{lemma}\label{enough_of_a_letter}

If $q\geq1$ and $m\geq3$, then every $q$-ary word $y$ of length at least
$(q-1)m^2-2qm+3m+1$ has at least one letter $x_i$ such that $|y|_{x_i}\geq m+1$.

\end{lemma}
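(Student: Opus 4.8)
The plan is to argue by contraposition using an elementary pigeonhole count, and then to verify that the rather ornate length threshold in the statement is really just a convenient quantity dominating the naive pigeonhole bound. Concretely, I would suppose toward a contradiction that $y$ is a $q$-ary word in which every letter occurs at most $m$ times, i.e.\ $|y|_{x_i}\le m$ for all $i\in\{1,2,\ldots,q\}$. Then
\[
|y|=\sum_{i=1}^{q}|y|_{x_i}\le qm,
\]
so any word all of whose letters appear at most $m$ times has length at most $qm$.

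The one computation to carry out is to check that the stated threshold exceeds this pigeonhole bound, i.e.\ that $(q-1)m^2-2qm+3m+1\ge qm+1$. Subtracting, one finds
\[
\bigl((q-1)m^2-2qm+3m+1\bigr)-(qm+1)=(q-1)m^2-3m(q-1)=(q-1)m(m-3),
\]
which is non-negative whenever $q\ge 1$ and $m\ge 3$, each factor being non-negative. Hence every $q$-ary word $y$ with $|y|\ge (q-1)m^2-2qm+3m+1$ in particular satisfies $|y|\ge qm+1>qm$, and by the display above it cannot be the case that every letter occurs at most $m$ times; so some $x_i$ has $|y|_{x_i}\ge m+1$, as required.

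I expect no genuine obstacle here: the mathematical content is a one-line pigeonhole argument. The only thing that might momentarily obscure matters is the baroque form of the threshold $(q-1)m^2-2qm+3m+1$, which is presumably written this way so as to line up with the word lengths appearing in the companion Lemmas~\ref{simple_case_existence} and \ref{simple_case_implication}. The key realization is simply that this expression is at least $qm+1$ for all admissible $q$ and $m$ (with equality exactly when $q=1$ or $m=3$), after which the lemma reduces immediately to counting letters against the number of available symbols.
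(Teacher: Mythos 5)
Your argument is correct and is essentially identical to the paper's: both reduce the lemma to the observation that a word with every letter occurring at most $m$ times has length at most $qm$, and both verify $(q-1)m^2-2qm+3m+1-(qm+1)=(q-1)m(m-3)\geq 0$. No differences worth noting.
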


\begin{proof}

It is clear that a longest $q$-ary word that does not contain $m+1$ occurrences of any letter is of length $qm$.
The result follows if $(q-1)m^2-2qm+3m+1\geq qm+1$. We verify that

$$(q-1)m^2-2qm+3m+1-(qm+1)=(q-1)m^2-3(q-1)m=(m-3)(q-1)m\geq0.$$

\end{proof}

\begin{lemma}\label{matching_multiples_of_m}

If $q\geq1$ and $m\geq3$, then every $q$-ary word $y$ whose length is
a multiple of $m$ and is at least $(q-1)m^2-2qm+3m$ satisfies $y = w
\shuffle z$, where $w$ is a word in $S^{\dagger}$ and $z$ has length
exactly $(q-1)m^2-2qm+3m$.

\end{lemma}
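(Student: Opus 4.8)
The plan is to extract $w$ from $y$ greedily, one length-$m$ block of a single repeated letter at a time, each such block being a copy of some $x_i^m \in S$. Write $L = (q-1)m^2 - 2qm + 3m$ for the target length of $z$, and observe that $L = m\bigl((q-1)m - 2q + 3\bigr)$ is a multiple of $m$. Since $|y|$ is also a multiple of $m$ with $|y| \geq L$, the difference $|y| - L$ is a nonnegative multiple of $m$, so it makes sense to induct on the integer $(|y|-L)/m$, the number of blocks that must be removed.

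For the base case $|y| = L$, take $w$ to be the empty word (which lies in $S^\dagger$ as the $0$-fold shuffle) and $z = y$; then trivially $y \in w \shuffle z$ with $|z| = L$. For the inductive step, suppose $|y| > L$, so in fact $|y| \geq L + m$. In particular $|y| \geq (q-1)m^2 - 2qm + 3m + 1$, so Lemma~\ref{enough_of_a_letter} supplies a letter $x_i$ with $|y|_{x_i} \geq m+1 \geq m$. Choosing any $m$ positions of $y$ carrying $x_i$ gives a match of the word $x_i^m \in S$; deleting those positions leaves a word $y'$ with $|y'| = |y| - m$, still a multiple of $m$ and still $\geq L$. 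By the induction hypothesis $y' \in w' \shuffle z$ for some $w' \in S^\dagger$ and some $z$ with $|z| = L$.

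Finally I would reassemble. The word $y$ is an interleaving of the deleted block $x_i^m$ with $y'$, and $y'$ is itself an interleaving of $w'$ and $z$, so $y$ is an interleaving of $x_i^m$, $w'$, and $z$. Taking $w$ to be the subword of $y$ carried by all positions not used by $z$, we have that $w$ is an interleaving of $x_i^m$ with $w'$, hence $w \in \{x_i^m\} \shuffle S^\dagger \subseteq S^\dagger$, while $y \in w \shuffle z$ with $|z| = L$, exactly as required.

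The only real content lies in this last paragraph: one must check that the subword of $y$ formed by the removed positions is genuinely a shuffle of $x_i^m$ with $w'$, and therefore an element of $S^\dagger$, rather than merely a word with the right letter multiplicities. This follows from the definition of the shuffle — an interleaving of an interleaving is an interleaving — together with the fact that $S^\dagger = \bigcup_k (S \shuffle \cdots \shuffle S)$ is closed under shuffling with an element of $S$: if $w' \in \underbrace{S \shuffle \cdots \shuffle S}_{k}$ then $x_i^m \shuffle w' \subseteq \underbrace{S \shuffle \cdots \shuffle S}_{k+1} \subseteq S^\dagger$. I expect this to be the most delicate point to phrase precisely, since it is the step where the iterated-shuffle structure, and not a mere counting argument, is actually used; the rest is routine arithmetic resting on the factorization $L = m\bigl((q-1)m - 2q + 3\bigr)$ and the threshold in Lemma~\ref{enough_of_a_letter}.
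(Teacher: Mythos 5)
Your proof is correct and follows essentially the same approach as the paper: repeatedly invoke Lemma~\ref{enough_of_a_letter} to find a letter occurring at least $m$ times, peel off a copy of $x_i^m$, and stop when exactly $(q-1)m^2-2qm+3m$ letters remain, which works because $|y|$ and the target length are both multiples of $m$. Your version merely makes the induction and the shuffle-closure step explicit where the paper leaves them implicit.
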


\begin{proof}

If $|y| > (q-1)m^2-2qm+3m$ then by Lemma~\ref{enough_of_a_letter}
there is some letter $x_i$ such that $|y|_{x_i} \geq m$.  We can
therefore match $x_i^m$ in $y$.  Repeat this process until there are
exactly $(q-1)m^2-2qm+3m$ unmatched letters.  The matched letters form
a word $w \in S^\dagger$ and the unmatched letters form the word $z$.

\end{proof}

\begin{lemma}\label{simple_case_implication}

If $q\geq 1$, $m\geq 3$ and every $q$-ary word of length $(q-1)m^2-2qm+3m$ is in $S^{\dagger}$, then every $q$-ary word of length at least $qm^2-2qm+2m$ is in $S^{\dagger}$.

\end{lemma}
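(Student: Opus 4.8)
The plan is to show that any word $y$ with $|y| \ge qm^2 - 2qm + 2m$ can be matched down to a word of length exactly $(q-1)m^2 - 2qm + 3m$ using only the monochromatic words $x_i^m$ and $x_i^{m+1}$ of $S$, and then to invoke the hypothesis on words of that length. Write $N_0 = (q-1)m^2 - 2qm + 3m$ and $N_1 = qm^2 - 2qm + 2m$. The two facts that drive everything are that both $N_0$ and $N_1$ are multiples of $m$, and that $N_1 - N_0 = m^2 - m = g(m,m+1) + 1$. Once $y$ has been matched down to an unmatched word $z$ of length exactly $N_0$, the hypothesis gives $z \in S^\dagger$; since the matched blocks form a shuffle of words $x_i^m, x_i^{m+1} \in S$, and since $S^\dagger \shuffle S^\dagger = S^\dagger$ (a shuffle of a word in $S^{\shuffle a}$ with one in $S^{\shuffle b}$ lies in $S^{\shuffle(a+b)}$), it follows that $y \in S^\dagger$. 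So the whole task reduces to producing such a matching.

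First I would deal with the residue of $|y|$ modulo $m$. Set $t = |y| - N_0 \ge m^2 - m$ and write $t = km + r$ with $0 \le r < m$; since $t \ge m(m-1)$ we get $k \ge m-1 \ge r$, and hence $t \ge r(m+1)$. The idea is to match exactly $r$ monochromatic blocks of length $m+1$ (correcting the residue, as $m+1 \equiv 1 \pmod m$) and then handle the rest in blocks of length $m$. Concretely, after matching $r$ words of the form $x_i^{m+1}$, the unmatched word $y''$ has length $|y| - r(m+1) = N_0 + (k-r)m$, which is a multiple of $m$ and is $\ge N_0$ because $k \ge r$. I can then apply Lemma~\ref{matching_multiples_of_m} to $y''$ to write $y'' = w \shuffle z$ with $w \in S^\dagger$ and $|z| = N_0$, which completes the reduction.

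It remains to check that the $r$ blocks of length $m+1$ can actually be matched. I would match them one at a time, each time using Lemma~\ref{enough_of_a_letter} to locate a letter occurring at least $m+1$ times in the current unmatched word and peeling off one $x_i^{m+1}$. Before matching the $j$-th block ($1 \le j \le r$) the current length is $|y| - (j-1)(m+1)$, which is smallest when $j = r$; using $r \le m-1$ and $|y| \ge N_1$ one verifies $|y| - (r-1)(m+1) \ge N_0 + 2$, so the length stays $\ge N_0 + 1$ throughout and Lemma~\ref{enough_of_a_letter} always supplies the required letter. (When $r = 0$ no such blocks are needed and one applies Lemma~\ref{matching_multiples_of_m} to $y$ directly.)

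The main obstacle is exactly this residue correction: one cannot reach length precisely $N_0$ by peeling off blocks of size $m$ alone, so the $(m+1)$-blocks are essential, and the argument closes only because the gap $N_1 - N_0$ equals $g(m,m+1)+1$. The two inequalities needing care are that matching the $r$ blocks of size $m+1$ does not overshoot $N_0$ (i.e.\ $t \ge r(m+1)$, equivalently $k \ge r$) and that the length never drops below the threshold $N_0+1$ required by Lemma~\ref{enough_of_a_letter}; both follow from $t \ge m^2 - m$ together with $r \le m-1$. Everything else is bookkeeping within the matching characterization of $S^\dagger$ from Section~2.
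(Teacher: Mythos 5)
Your proof is correct and follows essentially the same approach as the paper: both arguments peel off monochromatic blocks $x_i^{m}$ and $x_i^{m+1}$ (supplied by Lemma~\ref{enough_of_a_letter}) until exactly $(q-1)m^2-2qm+3m$ letters remain, then invoke the hypothesis on the leftover word. The only difference is bookkeeping --- the paper writes the excess length directly as $rm+s(m+1)$ via $m^2-m>g(m,m+1)$, whereas you first remove $r$ blocks of length $m+1$ to fix the residue modulo $m$ and then delegate the rest to Lemma~\ref{matching_multiples_of_m}; your version is if anything slightly more explicit about why the intermediate lengths stay above the threshold needed for Lemma~\ref{enough_of_a_letter}.
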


\begin{proof}

Assume that $y$ is a $q$-ary word of length at least $qm^2-2qm+2m$.
Since $|y|-[(q-1)m^2-2qm+3m]\geq
qm^2-2qm+2m-[(q-1)m^2-2qm+3m]=m^2-m>m^2-m-1=g(m, m+1)$, we can write
\[
|y|-[(q-1)m^2-2qm+3m] = rm + s(m+1)
\]
for some non-negative integers $r$ and $s$.  By
Lemma~\ref{enough_of_a_letter}, we can match letters in $y$ by making
$r$ choices from $\{x_1^m, x_2^m, \ldots, x_q^m \}$ and $s$ choices
from $\{x_1^{m+1}, x_2^{m+1},\ldots, x_q^{m+1}\}$, leaving exactly
$(q-1)m^2-2qm+3m$ unmatched letters. Let $z$ denote the word of
length $(q-1)m^2-2qm+3m$ consisting of these unmatched letters. By
hypothesis we have $z\in S^{\dagger}$.  Therefore $y$ consists of a
word in $S^{\dagger}$ shuffled with words in $S$ and so $y\in
S^{\dagger}$, as required.

\end{proof}

In the next lemma we will repeatedly apply the following procedure.

\parbox{\textwidth}{
$MATCH(y, i, j, r, \gamma)$:
\begin{itemize}
\item Repeat $r$ times for the first $r$ occurrences of $x_i$.
\begin{itemize}
\item Consider the first unmatched occurrence of $x_i$ in $y$.
\item If there are at least $m-1$ unmatched $x_j$'s to the left of the chosen $x_i$ then use a $x_j^{m-1}x_i$ to match $m-1$ of these $x_j$'s and this occurrence of $x_i$.
\item  If there are at least $m-1$ unmatched $x_j$'s to the right of the chosen $x_i$ then use a $x_ix_j^{m-1}$ to match $m-1$ of these $x_j$'s and this occurrence of $x_i$.
\end{itemize}
\item Use $\gamma$ $x_i^m$'s to match $\gamma m$ unmatched $x_i$'s.
\end{itemize}
}

\begin{lemma}\label{simple_case_every_word}

If $q\geq 2$ and $m\geq 3$, then every $q$-ary word of length $(q-1)m^2-2qm+3m$ is in $S^{\dagger}$.

\end{lemma}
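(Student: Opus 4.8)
The plan is to build a full matching of $y$ directly, using the procedure $MATCH$ to shave every letter down to a count that the self-words $x_i^m,x_i^{m+1}$ can finish off. First I would record the counts $n_i=|y|_{x_i}$ and note two features of the target length. It is a multiple of $m$, since $(q-1)m^2-2qm+3m=m\bigl[(q-1)(m-2)+1\bigr]$; and it lies exactly one below the bound of Lemma~\ref{enough_of_a_letter}, so at length $|y|$ we are \emph{not} guaranteed a letter occurring $m+1$ times. This is precisely why this length is the delicate base case and why one cannot simply quote Lemma~\ref{enough_of_a_letter} to produce an abundant helper. Recalling $g(m,m+1)=m^2-m-1$, a letter $x_i$ is matchable by $\{x_i^m,x_i^{m+1}\}$ alone exactly when $n_i$ lies in the numerical semigroup $\langle m,m+1\rangle$; writing $n_i=k_im+\rho_i$ with $0\le\rho_i<m$, this holds iff $k_i\ge\rho_i$. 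Call the other letters \emph{deficient}.

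The core step handles the deficient letters. For such an $x_i$ the call $MATCH(y,i,j,\rho_i,k_i)$ pairs its first $\rho_i$ occurrences with $m-1$ copies each of a helper $x_j$ and then clears the remaining $k_im$ occurrences with $k_i$ copies of $x_i^m$; a self-matchable letter can be finished directly or can serve as the helper $x_j$ and self-match whatever survives. The subtlety is that donating $m-1$ copies may itself push a helper into a gap of $\langle m,m+1\rangle$ --- for $q=3$, $m=3$ and counts $(1,4,4)$, pairing the single $x_1$ with $x_2$ leaves $x_2$ with two copies, which is deficient. The remedy is to \emph{cascade}: the newly deficient helper is in turn cleared by pairing its surplus against a further letter, so the cross-words organize into chains $x_{i_1}\to x_{i_2}\to\cdots$ terminating at a letter that can be driven to $0$ or to a multiple of $m$. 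I would prove such a system of chains always exists from a global count, using that $\sum_i\rho_i\equiv|y|\equiv 0\pmod m$ together with the exact value of $|y|$, which leaves just enough total donor capacity to absorb every deficiency and no more.

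An alternative organization is induction on $q$. Because the target length $L_q=(q-1)m^2-2qm+3m$ satisfies $L_q-L_{q-1}=m(m-2)$ and the least frequent letter $x_p$ satisfies $n_p\le |y|/q\le m(m-2)$ (here $m\ge3$ is used), one matches all of $x_p$ --- by cross-words and self-words --- together with just enough self-matches of the other letters to consume exactly $m(m-2)$ positions, leaving a $(q-1)$-ary word of length $L_{q-1}$ that is in $S^\dagger$ by the induction hypothesis, the base case being $x_1^m\in S$. Either framing reduces to the same combinatorial content.

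The main obstacle, in either framing, is \emph{positional} feasibility. A cross-word $x_j^{m-1}x_i$ or $x_ix_j^{m-1}$ must find its $m-1$ copies of $x_j$ entirely on one side of the chosen $x_i$, and a naive left-first, nearest-first choice can strand a later occurrence: for $m=3$ the word $x_1x_2x_1x_2x_1x_1$ gets stuck if the first $x_2$ grabs the two nearest $x_1$'s instead of the two farthest. For a single helper the governing condition is clean --- the helper should have at least $m-1$ times as many copies as the letters it serves --- and feasibility follows by always drawing the block from the more populous side and taking its extreme copies. The hard part is that, because $|y|$ sits exactly at the threshold, there is no slack, so when the donations must cascade across several helpers the position constraints of the different cross-words interact; one must order the $MATCH$ calls and choose which copies they consume so that no letter is ever stranded. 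The cleanest way I see to control this is to maintain throughout the invariant that, for each letter still serving as a helper, its unmatched copies exceed $m-1$ times its outstanding demand, and to show this invariant can be preserved at every call. Verifying that is where the real work lies.
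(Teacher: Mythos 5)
Your plan assembles the right ingredients --- the target length is a multiple of $m$, each letter's count is reduced into the numerical semigroup $\langle m,m+1\rangle$ via $MATCH$, and the alternative induction on $q$ using $L_q-L_{q-1}=m(m-2)$ is essentially the route the paper takes --- but the proof has a genuine gap exactly where you say ``the real work lies.'' Positional feasibility of the cross-words is the entire content of the lemma, and you defer it rather than establish it. Moreover, the invariant you propose to maintain (each helper's unmatched copies exceed $m-1$ times its outstanding demand) is quantitatively too weak: at the final call the demand is $1$, so the invariant only guarantees $m$ unmatched helpers, and for $m\ge 4$ these can split with at most $m-2$ on each side of the chosen occurrence, stranding it. What one actually needs is at least $2m-3$ unmatched helpers at every step, and the paper's proof of this lemma consists precisely of the explicit counts showing that in the generic cases (writing $|y|_{x_i}=\gamma m+r$, the cases $r\le m-2-\gamma$ and $r\ge m-\gamma$) at least $2m-2$ or $2m-1$ helpers remain when the last cross-word is placed.

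You also do not address the boundary case $r=m-1-\gamma$, in which the cross-words consume \emph{every} copy of the helper, so exactly $m-1$ unmatched helpers remain at the final step and no two-sided pigeonhole can succeed. The paper handles this with a separate device: instead of matching the next occurrence of the deficient letter, it matches the \emph{last} occurrence of whichever of the two letters ends the word, so that all $m-1$ remaining helpers necessarily lie on one side. Your cascading-chains mechanism is a complication the paper avoids entirely by choosing the number of cross-words so that the helper is always left with a multiple of $m$ copies (finished by $x_j^m$'s), and by funnelling the multi-letter interactions through the induction on $q$ together with Lemma~\ref{matching_multiples_of_m}. Without the $2m-3$ count and the last-occurrence trick for the tight case, the argument does not close.
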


\begin{proof}

By induction on $q$. First we prove the result for $q=2$. That is we prove that if $m\geq 3$ and $$S=\{x_1^m, x_1^{m+1}, x_2^m, x_2^{m+1}, x_1x_2^{m-1}, x_2^{m-1}x_1, x_2x_1^{m-1}, x_1^{m-1}x_2\},$$ then every binary word of length $m^2-m$ is in $S^{\dagger}$.

Let $y$ be a binary word of length $m^2-m$. If $|y|_{x_1}$ is a multiple of $m$, then clearly $|y|_{x_2}$ is also a multiple of $m$ and it follows that $y\in\{x_1^m, x_2^m\}^{\dagger}$. So assume $|y|_{x_1}=\gamma m+r$ where $\gamma$ and $r$ are non-negative integers such that $0\leq\gamma\leq m-2$ and $1\leq r\leq m-1$. It follows that $|y|_{x_2}=(m-2-\gamma)m+(m-r)$.

Case 1: $1\leq r\leq m-2-\gamma$.

Apply $MATCH(y, 1, 2, r, \gamma)$.
Lastly, use $m-r-1-\gamma$ $x_2^m$'s to match the remaining $x_2$'s. 
This process will match exactly $\gamma m+r$ $x_1$'s and $r(m-1)+(m-r-1-\gamma)m=m^2-m-r-\gamma m=(m-2-\gamma)m+(m-r)$ $x_2$'s. This procedure is possible because once we make it to the $r$-th $x_1$, 
there are at least $(m-2-\gamma)m+(m-r)-(r-1)(m-1)=m^2-\gamma m-rm-1\geq m^2-\gamma m-(m-2-\gamma)m-1=2m-1$ unmatched $x_2$'s ensuring the $r$-th $x_1$ is either preceded or followed by $m-1$ unmatched $x_2$'s.

Case 2: $m-\gamma\leq r\leq m-1$.

Apply $MATCH(y, 2, 1, m-r, m-2-\gamma)$.
Lastly, use $\gamma -m+r+1$ $x_1^m$'s to match the remaining $x_1$'s. 
This matches exactly $(m-2-\gamma)m+(m-r)$ $x_2$'s and $(\gamma-m+r+1)m+(m-r)(m-1)=\gamma m-m^2+rm+m+m^2-rm-m+r=\gamma m+r$ $x_1$'s. 
This procedure is possible because once we make it to the $(m-r)$-th $x_2$, 
there are at least $\gamma m+r-(m-r-1)(m-1)=\gamma m+r-m^2+m+rm-r+m-1=\gamma m-m^2+rm+2m-1\geq\gamma m-m^2+(m-\gamma)m+2m-1=2m-1$ unmatched $x_1$'s ensuring the $(m-r)$-th $x_2$ is either preceeded or followed by $m-1$ unmatched $x_1$'s.

Case 3: $r=m-1-\gamma$.

Subcase 1: $y$ ends with a $x_1$.

Apply $MATCH(y, 1, 2, m-\gamma-1, \gamma)$ but adjust the procedure so that once the first $m-\gamma-2$ $x_1$'s have been matched, match the last $x_1$ as opposed to the $(m-\gamma-1)$-st.
This procedure matches exactly $\gamma m+(m-\gamma-1)=\gamma m+r$ $x_1$'s and 
$(m-\gamma-1)(m-1)=m^2-\gamma m-2m+\gamma+1=(m-2-\gamma)m+(m-r)$ $x_2$'s. 
This procedure is possible because once we match the first $m-\gamma-2$ $x_1$'s, there will be $m-1$ unmatched $x_2$'s and they will all be followed by the last $x_1$.

Subcase 2: $y$ ends with a $x_2$.

Apply $MATCH(y, 2, 1, \gamma+1, m-2-\gamma)$ but adjust the procedure so that once the first $\gamma$ $x_2$'s have been matched, match the last $x_2$ as opposed to the $(\gamma+1)$-st.
This procedure matches exactly $(\gamma+1)(m-1)=\gamma m +m-1-\gamma=\gamma m+r$ $x_1$'s and $(m-2-\gamma)m+\gamma+1=(m-2-\gamma)m+(m-r)$ $x_2$'s. This procedure is possible because once we match the first $\gamma$ $x_2$'s, there will be $m-1$ unmatched $x_1$'s and they will all be followed by the last $x_2$.

In every case, all the letters in $y$ can be matched using words in $S$. Thus $y\in S^{\dagger}$ and the result follows for $q=2$.

Assume the result holds for some $q\geq2$. It suffices to show that the result holds for $q+1$. Let $y$ be a $(q+1)$-ary word of length $qm^2-2(q+1)m+3m$. Assume for the sake of contradiction that $|y|_{x_i}\geq m^2-2m+1$ for all $i\in\{1, 2, ..., q+1\}$. Then we get
\begin{eqnarray*}
|y|&\geq&(q+1)(m^2-2m+1)\\
&=&qm^2-2qm+q+m^2-2m+1\\
&=&qm^2-2(q+1)m +m^2+q-1\\
&\geq&qm^2-2(q+1)m+3m+q-1\\
&>&qm^2-2(q+1)m+3m.
\end{eqnarray*}
It follows that $|y|_{x_i}\leq m^2-2m$ for some letter $x_i$ and WLOG
we will say it is $x_a$. If $|y|_{x_a}=m^2-2m$, then we can match all
of the $x_a$'s with $m-2$ $x_a^m$'s. Thus, $y$ consists of a $q$-ary word of length $(q-1)m^2-2qm+3m$ shuffled with $m-2$ $x_a^m$'s. By our induction hypothesis, $y\in S^{\dagger}$ and we are done. So assume $|y|_{x_a}=\gamma m+r$ where $\gamma$ and $r$ are non-negative integers such that $0\leq\gamma\leq m-3$ and $0\leq r\leq m-1$. 
Let $l$ denote the number of letters in $y$ that are not $x_a$'s. 
Then $$l=qm^2-2(q+1)m+3m-\gamma m-r=qm^2-2qm+m-\gamma m -r.$$ 
By the pigeonhole principle, at least one of these letters (call it $x_b$) has at least 
$$\left\lceil{\frac{qm^2-2qm+m-\gamma m-r}{q}}\right\rceil=\left\lceil{m^2-2m+\frac{m-\gamma m-r}{q}}\right\rceil$$ occurrences in $y$. If $\gamma=0$, then $\left\lceil{m^2-2m+\frac{m-\gamma m-r}{q}}\right\rceil\geq m^2-2m+1=(m-1)^2$. If $\gamma\neq0$, then $\left\lceil{m^2-2m+\frac{m-\gamma m-r}{q}}\right\rceil\geq m^2-2m+m-\gamma m-r=m^2-m-\gamma m-r$. 

Case 1: $0\leq r\leq m-2-\gamma$.

Apply $MATCH(y, a, b, r, \gamma)$.
Once we get to the $r$-th $x_a$, there are at least $(m-1)^2-(r-1)(m-1)\geq(m-1)^2-(m-3)(m-1)=2m-2$ unmatched $x_b$'s if $\gamma=0$ and at least $m^2-m-\gamma m-r-(r-1)(m-1)=m^2-\gamma m-rm-1\geq m^2-\gamma m-(m-2-\gamma)m-1=2m-1$ unmatched $x_b$'s if $\gamma\neq0$. This ensures the $r$-th $x_a$ is either preceded or followed by $m-1$ unmatched $x_b$'s. 
This procedure leaves at least $qm^2-2(q+1)m+3m-(\gamma+r)m\geq
qm^2-2(q+1)m+3m-(\gamma+m-2-\gamma)m=(q-1)m^2-2qm+3m$ unmatched
letters.  These unmatched letters form a word over a $q$-letter
alphabet, which by Lemma~\ref{matching_multiples_of_m} can be written
as the shuffle of a word in $S^\dagger$ and a word $z$ of length
exactly $(q-1)m^2-2qm+3m$.  By the induction hypothesis, the $q$-ary
word $z$ is in $S^\dagger$.  It follows that $y$ is in $S^\dagger$ as well.

Case 2: $m-\gamma\leq r\leq m-1$.

Apply $MATCH(y, b, a, m-r, 0)$. Note that since $m^2-m-\gamma m-r\geq (m-1)m-(m-3)m-r\geq m-r$, there are at least $m-r$ $x_b$'s in $y$ ($\gamma=0$ is omitted since it does not fall under this case). Then use $\gamma-m+r+1$ $x_a^m$'s to match the remaining $x_a$'s.
Once we get to the $(m-r)$-th $x_b$, there will still be at least 
$\gamma m+r-(m-r-1)(m-1)=\gamma m +r-m^2+m+rm-r+m-1\geq\gamma m+(m-\gamma)m-m^2+2m-1=2m-1$ $x_a$'s ensuring that the $(m-r)$-th $x_b$ is either preceded or followed by $m-1$ $x_a$'s. 
This procedure leaves at least 
$qm^2-2(q+1)m+3m-[(m-r)m+(\gamma-m+r+1)m]=qm^2-2(q+1)m+3m-(\gamma+1)m\geq
qm^2-2(q+1)m+3m-(m-3+1)m=(q-1)m^2-2qm+3m$ unmatched letters.  As in
Case~1, we can use Lemma~\ref{matching_multiples_of_m} and the
induction hypothesis to show that the $q$-ary word consisting of the unmatched
letters is in $S^\dagger$, and consequently, so is $y$.

Case 3: $r=m-1-\gamma$.

Assume for the sake of contradiction that $|y|_{x_a}=\gamma m +(m-1-\gamma)$ and $|y|_{x_i}\geq m^2-2m+1$ for all $i$ such that $i\neq a$. Then 
\begin{eqnarray*}
|y|&\geq&\gamma m +(m-\gamma-1) +q(m^2-2m+1)\\
&=&\gamma(m-1)+m-1+qm^2-2qm+q\\
&\geq&qm^2-2qm+q+m-1\\
&=&qm^2-2(q+1)m+3m+q-1\\
&>&qm^2-2(q+1)m+3m.
\end{eqnarray*}
Thus, there exists another letter (call it $x_c$) such that $|y|_{x_c}\leq m^2-2m$. If $|y|_{x_c}=m^2-2m$, then you can match all of the $x_c$'s with $m-2$ $x_c^m$'s. Thus, $y$ consists of a $q$-ary word of length $(q-1)m^2-2qm+3m$ shuffled with $m-2$ $x_c^m$'s and we are done.
So assume $|y|_{x_c}=\delta m+s$ where $\delta$ and $s$ are non-negative integers such that 
$0\leq\delta\leq m-3$ and $0\leq s\leq m-1$. If $0\leq s\leq m-2-\delta$, then apply Case 1 except with $x_c$ rather than $x_a$. 
If $m-\delta\leq s\leq m-1$, then apply Case 2 except with $x_c$ rather than $x_a$. If not, then $s=m-1-\delta$. Assume that the last occurrence of an $x_c$ is after the last occurrence of an $x_a$ in $y$. Apply $MATCH(y, c, a, \gamma+1, 0)$ but adjust the procedure so that once the first $\gamma$ $x_c$'s have been matched, match the last $x_c$ as opposed to the $(\gamma+1)$-st.
Note that there are at least $\gamma+1$ $x_c$'s since $\gamma+1\geq m-3+1=m-2$ and there are at least $m-2$ $x_c$'s by definition. This procedure is possible because once we match the first $\gamma$ $x_c$'s, there will be $m-1$ unmatched $x_a$'s and they will all be followed by the last $x_c$.
This procedure leaves at least $qm^2-2(q+1)m+3m-(\gamma+1)m\geq
qm^2-2(q+1)m+3m-(m-3+1)m=(q-1)m^2-2qm+3m$ unmatched
letters. As in
Case~1, we can use Lemma~\ref{matching_multiples_of_m} and the
induction hypothesis to show that the $q$-ary word consisting of the unmatched
letters is in $S^\dagger$, and consequently, so is $y$.

The case is similar if the last occurrence of $x_a$ is after the last
occurrence of a $x_c$ in $y$ (just apply $MATCH(y, a, c, \delta+1, 0)$
but adjust the procedure so that once the first $\delta$ $x_a$'s have
been matched, match the last $x_a$ as opposed to the $(\delta+1)$-st).

In all cases we have shown that $y\in S^{\dagger}$, so the proof is complete.

\end{proof}

This sequence of lemmas completes the proof of
Theorem~\ref{simple_case_bound}.  Given that we know the length of a
longest word not in $S^\dagger$ for the particular family of sets $S$
studied in this section, a natural question would be to count exactly
how many words are not in $S^\dagger$.  Unfortunately, this seems to
be rather difficult.  Here is a rather weak lower bound.

\begin{theorem}

Let $t(S)$ denote the number of words not in $S^{\dagger}$. If $q\geq
2$ and $m\geq 3$,$$t(S) \geq \frac{q^{m^2+m+1}-q^{m^2+m}+q^{m+1}-q^{2m+1}+q^m-q}{(q-1)(q^m-1)(q^{m+1}-1)}.$$

\end{theorem}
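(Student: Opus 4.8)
The plan is to exploit a crude but clean length obstruction. The decisive observation is that \emph{every} word in the prototypical set $S$ has length exactly $m$ or $m+1$: indeed $x_i^m$ and the mixed words $x_ix_j^{m-1}$, $x_j^{m-1}x_i$ all have length $m$, while $x_i^{m+1}$ has length $m+1$. Since a word of $S^\dagger$ is obtained by covering its positions with a pairwise disjoint family of matches of words from $S$, its length is a sum of lengths of words from $S$, hence a non-negative integer combination of $m$ and $m+1$. Consequently, writing $\langle m,m+1\rangle$ for the numerical semigroup generated by $m$ and $m+1$, we have the inclusion $\{|y| : y\in S^\dagger\}\subseteq\langle m,m+1\rangle$. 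Contrapositively, \emph{every} word whose length lies outside $\langle m,m+1\rangle$ fails to be in $S^\dagger$. As there are exactly $q^\ell$ words of each length $\ell$ over our $q$-letter alphabet, this already yields
\[
t(S)\ \geq\ \sum_{\ell\ \notin\ \langle m,m+1\rangle} q^{\ell},
\]
and the whole problem reduces to summing $q^\ell$ over the gaps of $\langle m,m+1\rangle$. (This is exactly why the bound is ``rather weak'': it ignores all the words of representable length that are nonetheless unmatchable, such as the witness of Lemma~\ref{simple_case_existence}.)

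Next I would pin down the gaps explicitly. For $\ell = am+b(m+1)=(a+b)m+b$, every representation forces $b\equiv\ell\pmod m$, so the cheapest choice is $b$ equal to the residue $r=\ell\bmod m$; writing $\ell=km+r$ with $0\le r\le m-1$ and $k=\lfloor \ell/m\rfloor$, one checks that $\ell$ is representable if and only if the corresponding $a=k-r$ is non-negative, i.e.\ if and only if $k\geq r$. Hence the gaps are precisely the numbers
\[
\ell = km+r,\qquad 1\le r\le m-1,\quad 0\le k\le r-1,
\]
recovering in particular the genus $m(m-1)/2$ and the Frobenius number $m^2-m-1=g(m,m+1)$ as a sanity check.

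Finally I would evaluate the resulting double sum
\[
\sum_{r=1}^{m-1}\sum_{k=0}^{r-1} q^{km+r}
=\sum_{r=1}^{m-1} q^{r}\,\frac{q^{rm}-1}{q^m-1}
=\frac{1}{q^m-1}\left(\sum_{r=1}^{m-1} q^{r(m+1)}-\sum_{r=1}^{m-1} q^{r}\right).
\]
The two inner sums are geometric series with ratios $q^{m+1}$ and $q$ respectively, which is exactly where the three factors $q-1$, $q^m-1$, and $q^{m+1}-1$ of the claimed denominator arise; placing everything over the common denominator $(q-1)(q^m-1)(q^{m+1}-1)$ and expanding collapses the numerator to $q^{m^2+m+1}-q^{m^2+m}+q^{m+1}-q^{2m+1}+q^m-q$, matching the stated expression. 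The only real work is this bookkeeping with the geometric series and the final algebraic simplification; the mathematical content is entirely contained in the first paragraph's length obstruction, so I expect no conceptual obstacle beyond verifying that the closed form comes out precisely as asserted.
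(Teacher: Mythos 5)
Your proposal is correct and follows essentially the same route as the paper: both use the observation that all words of $S$ have length $m$ or $m+1$, so any word whose length is a gap of the numerical semigroup $\langle m,m+1\rangle$ lies outside $S^\dagger$, and both then sum $q^\ell$ over exactly the same set of gaps (your double sum is the paper's with the order of summation transposed), arriving at the identical closed form.
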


\begin{proof}

Since each word in $S$ is of length $m$ or $m+1$, it follows that any
word with a length that cannot be written as a non-negative integer
linear combination of $m$ and $m+1$ is not in $S^{\dagger}$. It is
clear that $\ell$ cannot be written as a non-negative integer linear
combination of $m$ and $m+1$ if and only if $\ell=\gamma m+r$ for non-negative integers $\gamma$ and $r$ where $\gamma+1\leq r\leq m-1$. Therefore,
\begin{eqnarray*}
t&\geq&\sum_{i=1}^{m-1} q^i+\sum_{i=2}^{m-1} q^{m+i}+\cdots+\sum_{i=m-1}^{m-1} q^{(m-2)m+i}\\
&=&\sum_{j=1}^{m-1} \sum_{i=j}^{m-1} q^{(j-1)m+i}\\
&=&\sum_{j=1}^{m-1} \frac{q^{(j-1)m}(q^m-q^j)}{q-1}\\
&=&\frac{q^{m^2+m+1}-q^{m^2+m}+q^{m+1}-q^{2m+1}+q^m-q}{(q-1)(q^m-1)(q^{m+1}-1)}.
\end{eqnarray*}

\end{proof}

This lower bound could certainly be improved by more complicated
arguments, and similarly an upper bound could be calculated as well,
but it seems difficult to get an exact, closed-form expression for
$t(S)$.

\section{Conclusion}
One possibility for future work is to improve the upper bound given in
Theorem~\ref{upper bound}.  Another issue not addressed in this paper
concerns the computational complexity of the following problem:
\textit{Given a set of words $S$ such that $S^\dagger$ is co-finite,
  what is the length of a longest word not in $S^\dagger$?}  We have
given an upper bound for this length but we have not given an
algorithm to determine it exactly.  Returning for a moment to the
classical Frobenius problem, we note that Ram\'irez--Alfons\'in showed
that computing the Frobenius number for a given set of integers is
NP-hard with respect to Turing reductions \cite{RA96}.  We also
claimed in our introduction that the classical Frobenius problem is
equivalent to the special case of the problem considered in this paper
where the set $S$ is over a unary alphabet.  However, this is not
entirely true from the point of view of computational complexity.  In
the classical setting, the size of the inputs $m_1, m_2, \ldots, m_k$
would be measured in terms of the lengths of their base-$2$
representations (i.e., in terms of their base-$2$ logarithms); in our
setting, these integers are represented as the unary strings
$0^{m_1}, 0^{m_2}, \ldots, 0^{m_k}$.  With respect to this unary
representation, the classical Frobenius problem is solvable in
polynomial time.  Over larger alphabets, however, we don't know what
the computational complexity is of determining the length of a longest
word not in $S^\dagger$.

\end{document}